\def\ps@headings{%
	\def\@oddhead{\mbox{}\scriptsize\rightmark \hfil \thepage}%
	\def\@evenhead{\scriptsize\thepage \hfil \leftmark\mbox{}}%
	\def\@oddfoot{}%
	\def\@evenfoot{}}
\makeatother \pagestyle{headings}
\newtheorem{theorem}{Theorem}
\newtheorem{remark}{Remark}
\newtheorem{proposition}{Proposition}
\begin{document}
	
	\title{Applicable Regions of Spherical and Plane Wave Models for Extremely Large-Scale Array Communications}

	\author{\authorblockN{Renwang Li, Shu Sun, \IEEEmembership{Member,~IEEE}, and Meixia Tao, \IEEEmembership{Fellow,~IEEE}}\\
		\thanks{The authors are with the Department of Electronic Engineering, Shanghai Jiao Tong University, Shanghai, China (email:\{renwanglee, shusun, mxtao\}@sjtu.edu.cn).}
	}
	\maketitle
	
	\begin{abstract}
		Extremely large-scale array (XL-array) communications can significantly improve the spectral efficiency and spatial resolution, and has great potential in next-generation mobile communication networks. A crucial problem in XL-array communications is to determine the boundary of applicable regions of the plane wave model (PWM) and spherical wave model (SWM). In this paper, we propose new PWM/SWM demarcations for XL-arrays from the viewpoint of channel gain and rank. Four sets of results are derived for four different array setups. First, an equi-power line is derived for a point-to-uniform linear array (ULA) scenario, where an inflection point is found at  $\pm \frac{\pi}{6}$ central incident angles. Second, an equi-power surface is derived for a point-to-uniform planar array (UPA) scenario, and it is proved that $\cos^2(\phi) \cos^2(\varphi)=\frac{1}{2}$ is a dividing curve, where $\phi$ and $\varphi$ denote the elevation and azimuth angles, respectively. Third, an accurate and explicit expression of the equi-rank surface is obtained for a ULA-to-ULA scenario. Finally, an approximated expression of the equi-rank surface is obtained for a ULA-to-UPA scenario. With the obtained closed-form expressions, the equi-rank surface for any antenna structure and any angle can be well estimated.  {Furthermore, the effect of scatterers is also investigated, from which some insights are drawn.}
	\end{abstract}

\begin{IEEEkeywords}
	Extremely large-scale array (XL-array); near-/far-field;  Rayleigh distance; effective rank; spherical/plane wave.
\end{IEEEkeywords}
	
	\section{Introduction}
	With the commercialization of the fifth-generation (5G) communication network, beyond 5G and the sixth-generation (6G) are now in sight \cite{zhang2020prospective, 8869705, you2021towards, wang2022vision}. 6G is expected to exploit new spectrum resources, including millimeter wave (mmWave) and Terahertz (THz) bands. With much shorter wavelengths of mmWave and THz bands as compared to the conventional microwave spectrum, more antennas can be integrated into a small space. Consequently, the number of antennas continues to increase, from a standard of 64 \cite{zhang2020prospective} to hundreds or thousands. Naturally, extremely large-scale multiple-input multiple-output (XL-MIMO) communication has gained great research interest in both academia and industry \cite{9187980}. XL-MIMO is also known as ultra-massive MIMO (UM-MIMO) \cite{akyildiz2016realizing}, extra-large scale massive MIMO \cite{9170651}, and extremely large aperture massive MIMO (xMaMIMO) \cite{8644126}. We use \emph{extremely large-scale array} (XL-array) \cite{9617121} throughout this paper for uniformity. XL-array communications are expected to significantly improve the transmission rate, spectral efficiency, and spatial resolution compared with existing systems. Therefore, XL-array has great potential in satellite, unmanned aerial vehicle (UAV), and reconfigurable  intelligent surface (RIS) assisted communication systems  \cite{9580418, 9672766, 8319526, zhang2021wireless, 9867922, 9724202, 9911191, s22145297}. 
	
	Generally, the electromagnetic radiation can be divided into far-field and near-field regions \cite{8736783}. In the far-field region, the angles at which the incident wave reaches each antenna can be considered the same. Thus, the corresponding channel can be modeled under the plane wave assumption. The plane wave model (PWM) only considers the angle and the number of antennas, and thus  is highly convenient for channel modeling and performance analysis. In the near-field  region, the channel is likely to be modeled under the spherical wave assumption, where the phase and the difference in traveling distance at different antennas cannot be ignored. The spherical wave model (SWM) includes the angles, the number of antennas, as well as the distance between the incident wave and each antenna, and it is accurate but complicated. 
	{Note that the antenna-channel interactions can be well described by spherical vector wave \cite{8453017, 8657705, 9091906}, which is different from the concept of SWM discussed herein.}
	Due to the large aperture of an XL-array, the near field  expands as compared with a conventional antenna array, thus the PWM may not be applicable any more in the proximity of the XL-array. Thus, it becomes crucial to determine the applicable regions of PWM and SWM for XL-array communications. A classical demarcation between near and far fields is the Rayleigh distance defined as $\frac{2 D^2}{\lambda}$ \cite{balanis2012advanced, balanis2015antenna, 7942128}, where $D$ and $\lambda$ denote the antenna/array aperture and signal wavelength, respectively. Specifically, the classical Rayleigh distance characterizes the minimum distance where the maximum phase difference across the whole antenna array is no greater than $\frac{\pi}{8}$ \cite{balanis2012advanced, balanis2015antenna}.  The Rayleigh distance is based on the phase difference which may not be the most appropriate evaluation metric in many practical systems. Hence, a new type of demarcations between the applicable regimes of PWM and SWM suitable for XL-arrays is urgently required.
	
	There are some previous efforts devoted towards the exploration of more accurate boundaries for PWM and SWM \cite{9617121, 1510955, 4155681, 4799060, 6800118, 7414041, liu2016channel, 7501567,7981398, lu2021does, cui2021near, bjornson2021primer}. The authors in \cite{1510955} find that the channel capacity for short-range MIMO systems under the PWM is lower than that under the SWM. A threshold distance below which the SWM is required for accurate performance estimation is obtained by using empirical fitting. The authors in \cite{4155681} propose a technique for realizing high-rank channel capacity in line-of-sight (LoS) MIMO transmission scenario by optimizing antenna placement. Based on the derived approximation expression of the largest eigenvalue of the LoS MIMO channels employing the uniform linear array (ULA) structure, the authors in \cite{4799060} investigate the distance where the ratio of the largest eigenvalue given by the SWM and that given by the PWM reaches a predefined threshold. The authors in \cite{6800118} obtain explicit expressions for some channel eigenvalues at certain discrete system settings under MIMO scenario, and propose an effective multiplexing distance where the channel can support a certain number of simultaneous spatial streams at a given signal-to-noise ratio (SNR) in a pure LoS environment. An analytical SWM for large linear arrays is proposed in \cite{7414041}, which is proved to be compatible with conventional PWM. The authors in \cite{liu2016channel} investigate the channel capacity under the point-to-point system and multi-user system. The performances under the SWM and the PWM are compared, and results show that the systems using the more realistic and accurate SWM can achieve a higher channel capacity than that under the PWM. 
	{The authors in \cite{7501567} and \cite{7981398} propose a space-alternating generalized expectation-maximization based localization algorithm, where the polarization, delay and Doppler effect are considered.}
	A closed-form expression of SNR with maximum ratio combining (MRC) is obtained in \cite{lu2021does} by considering the phase and amplitude variations among different antennas, which is extended in \cite{9617121}  by taking the projected aperture across array elements into consideration. They define a uniform-power distance where the power ratio between the weakest and strongest array element is no smaller than a certain threshold. The authors in \cite{cui2021near} define an effective Rayleigh distance according to the achievable array gain if the base station (BS) utilizes the far-field beamforming vector to serve a user located in the near field. The authors in \cite{bjornson2021primer} consider the near-field region of a passive antenna and an antenna array. They obtain an explicit expression of the distance where at most 3 dB propagation loss incurs. Recently,  the Rayleigh distance originally coined for the single-hop communication system has also been extended to two-hop  RIS-aided communication systems \cite{bjornson2021primer, 9538864, 9133126, zhang2021intelligent, cui2022near}.
	
	{Most of the above efforts focus on the phase change, although some existing works, e.g., \cite{9617121, cui2021near, bjornson2021primer}, propose PWM/SWM demarcations according to signal amplitudes solely or other different criteria, those are for  linear arrays, and a more comprehensive and accurate criterion is absent.} Therefore, we aim to explore the region where the SWM can be approximated by the PWM with high accuracy and under various circumstances, from the viewpoint of  { \textit{channel gain} and \textit{rank}} which are key evaluation metrics in wireless communication systems. We consider four scenarios, i.e., point-to-ULA, point-to-uniform planar array(UPA), ULA-to-ULA, and ULA-to-UPA, by taking into account the variations of both the signal amplitude and phase across array elements, as well as the incident angle.  {As mmWave or even THz techniques, along with XL-arrays,  may be employed in 6G, they are likely to be used for short/middle-range high-date-rate communication. Thus, the considered propagation  is LoS dominated or  with only  a limited number of scattered paths.}
	
	The main contributions of our work are summarized as follows:
	\begin{itemize}
		\item For the point-to-ULA scenario, we obtain an equi-power line where the ratio between the received power under the SWM and that under the PWM reaches a certain threshold. In addition, contrary to the intuition that the equi-power line would evolve smoothly over incident angles, it is discovered that an inflection point exists at the central incident angles of $\pm\frac{\pi}{6}$. When the absolute value of the central incident angle is smaller than $\frac{\pi}{6}$, the PWM always overestimates the received power compared with the SWM. Otherwise, the PWM first overestimates, then underestimates  the received power as the distance increases.
		\item For the point-to-UPA scenario, an equi-power surface is obtained. Similarly, it is found that $\cos^2(\phi) \cos^2(\varphi)=\frac{1}{2}$ is a dividing curve, where $\phi$ and $\varphi$ denote  the elevation and azimuth angles, respectively. When $\cos^2(\phi) \cos^2(\varphi)\geq \frac{1}{2}$, the PWM always overestimates the received power compared with the SWM. Otherwise, the PWM first overestimates, then underestimates  the received power as the distance increases.
		\item For the ULA-to-ULA scenario, we obtain an accurate and explicit expression of the equi-rank surface, where the effective rank \cite{7098875} of the channel under the SWM approximately equals that provided by the PWM. With the closed-form expression obtained, the equi-rank surface for any antenna structure and any angle can be well estimated.
		\item For the ULA-to-UPA scenario, an approximated expression of the equi-rank surface is obtained when the ULA is parallel to the UPA. The equi-rank surface for any antenna structure and any angle can be approximated using the derived expression.
		\item Multiple scatterers are also considered. For the point-to-ULA case, the phases of different paths cannot be cancelled out  as in the single-LoS-path case under  MRC, leading to a much larger threshold distance.  {The threshold distance enlarges as the number of scatterers increases both with and without the existence of an LoS path. For the ULA-to-ULA case, the obtained threshold distance approaches the analytical  approximation when the power of scattered paths is small.}
	\end{itemize}
	
	The rest of the paper is organized as follows. Section II investigates the equi-power line of a ULA. Section III discusses the equi-power surface of a UPA. Section IV explores the equi-rank surface of a ULA to a ULA. Section V investigates the equi-rank surface of a ULA to a UPA and Section VI concludes this paper.
	
	\begin{figure}[t]
		\begin{centering}
			\includegraphics[width=.4\textwidth]{./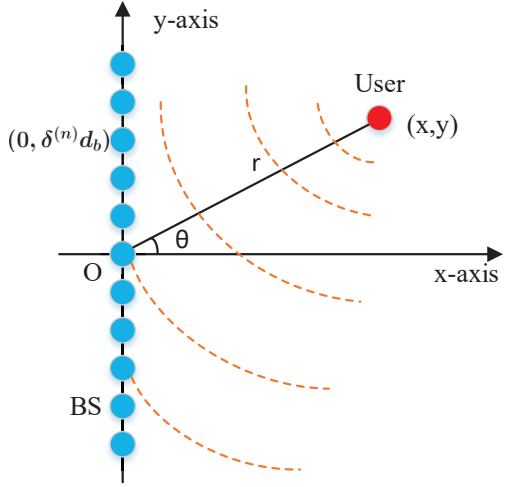}
			\caption{ {The system layout of a point to a ULA under the spherical wave model (SWM)}.}\label{fig_ula}
		\end{centering}
	\end{figure}
	\section{Equi-Power Line of a ULA} \label{sec_ula}
	In this section, we consider the equi-power line of an $N$-element ULA, as illustrated in Fig.~\ref{fig_ula}. Suppose the midpoint of the ULA lies at $(0,0)$ in the XY-plane. The coordinate of the $n$-th antenna is $(0, \delta^{(n)} d)$, where $\delta^{(n)} = n-\frac{N-1}{2}$ with $n=0,1,\ldots,N-1$, and $d$ denotes the antenna spacing. The array aperture is $D=(N-1)d\approx Nd$ for large $N$. A single-antenna user is located at $(x,y)$, with polar coordinate of $(r,\theta)$ where $r$ denotes the distance between the user and the center of the ULA, and $\theta$ is termed the central incident angle. Therefore, we have
	\begin{equation}
		\left\{\begin{array}{c}
			x= r\cos (\theta), \\
			y= r\sin (\theta).
		\end{array}\right.
	\end{equation}
	By considering the LoS path only, the exact channel $\mathbf{h} \in \mathbb{C}^{N\times 1}$ between the ULA and the user under the SWM is given by \cite{6800118, 7414041}
	\begin{equation} \label{cha_ula_exact}
		{\mathbf{h}} = \left[ \frac{\lambda}{4\pi r_1} e^{-j 2\pi \frac{r_1}{\lambda}}, \frac{\lambda}{4\pi r_2} e^{-j 2\pi \frac{r_2}{\lambda}},\ldots,\frac{\lambda}{4\pi r_N} e^{-j 2\pi \frac{r_N}{\lambda}} \right]^T,
	\end{equation}
	where $\lambda$ denotes the carrier wavelength, and $r_n$ refers to the distance between the $n$-th antenna and the user, satisfying
	\begin{equation}
		\begin{aligned}
			r_n &= \sqrt{x^2+ \left(y-\delta^{(n)} d \right)^2}\\
			&= \sqrt{r^2- 2r d \delta^{(n)} \sin (\theta) + \left(\delta^{(n)} d \right)^2}.\\
		\end{aligned}
	\end{equation}
	When the user lies in the far-field of the ULA, the channel in Eq. \eqref{cha_ula_exact} under the PWM can be approximated as
	\begin{equation}
		\overline{\mathbf{h}} = \frac{\lambda}{4\pi r} \left[  e^{-j 2\pi \frac{\overline{r}_1}{\lambda}},  e^{-j 2\pi \frac{\overline{r}_2}{\lambda}},\ldots, e^{-j 2\pi \frac{\overline{r}_N}{\lambda}} \right]^T,
	\end{equation}
	where $\overline{r}_n=r- d \delta^{(n)} \sin (\theta)$ due to the first order Taylor expansion $\sqrt{1+x} \approx 1+ \frac{1}{2} x$ which holds if the array length is much smaller than $r$. {When the beamformer of the ULA is adopted as $\mathbf{v}=\frac{\overline{\mathbf{h}}^H} {\|\overline{\mathbf{h}}\|}$, i.e., MRC, the maximum channel gain  {$G_\mathrm{far-field}$} can be achieved, which can be expressed as}
	\begin{equation}
		{G_\mathrm{far-field}}= (\mathbf{v} \overline{\mathbf{h}})^2 = \|\overline{\mathbf{h}}\|^2 = \left( \frac{\lambda}{4 \pi}\right)^2 \frac{N}{r^2}.
	\end{equation}
	However, the far-field approximation is inaccurate when the distance $r$ is small. Thereby, we would like to explore the boundary for PWM and SWM. With MRC, the maximum channel gain  {$G_\mathrm{near-field}$} can be achieved, which can be represented as
	\begin{equation}
		{G_\mathrm{near-field}}= \|\mathbf{h}\|^2 = \left( \frac{\lambda}{4 \pi}\right)^2 \sum \limits_{n=1}^N \frac{1}{r_n^2}.
	\end{equation}
	We are interested in the ratio of  {$G_\mathrm{near-field}$ to $G_\mathrm{far-field}$},  i.e.,
	\begin{equation}
		{\mu (r, \theta) \triangleq \frac{G_\mathrm{near-field}} {G_\mathrm{far-field}} }= \frac{r^2}{N} \sum \limits_{n=1}^N \frac{1}{r_n^2}.
	\end{equation}
	The coefficient $\mu(r, \theta)$ represents the normalized received power at position $(r,\theta)$. For some positions, the consequent coefficient $\mu(r, \theta)$ may be the same. It is obvious that when the distance $r$ goes to infinity, $\mu(r, \theta) \approx 1$ due to $r_n\approx r$.  We would like to study the boundary distance where the received power under the SWM is nearly the same as that under the PWM. The solutions $r(\theta)$ of $\mu(r, \theta)=\delta_\Delta$ represent the distances with the same normalized received power, where the threshold $\delta_\Delta$ is a constant that is close to one as for the PWM. Thus, we name the solutions $r(\theta)$ as \emph{equi-power line}. In the following, the closed-form expression of $\mu(r,\theta)$ is obtained through Theorem \ref{theorem_ula}, followed by the investigation of the equi-power line.
	
	\begin{theorem} \label{theorem_ula}
		In the point-to-ULA scenario, the normalized received power $\mu(r, \theta)$ has a closed-form expression as follows,
		\begin{equation} \label{expr_ula}
			\begin{aligned}
				\mu(r, \theta) =& \; \frac{r}{Nd\cos (\theta)} \left[ \arctan{\left(\frac{Nd}{2 r \cos (\theta)} + \tan{ (\theta) }\right) } \right. \\
				& + \left. \arctan{\left(\frac{Nd}{2 r \cos (\theta)} - \tan{ (\theta) } \right) } \right].
			\end{aligned}
		\end{equation}
	\end{theorem}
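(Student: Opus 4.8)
The plan is to recognize the sum $\sum_{n=1}^N \frac{1}{r_n^2}$ as a Riemann sum and convert it into a tractable integral. First I would complete the square in the expression for $r_n^2$. Writing $u_n = \delta^{(n)} d$ for the signed displacement of the $n$-th antenna from the array center, the denominator becomes
\begin{equation}
r_n^2 = u_n^2 - 2r\sin(\theta)\, u_n + r^2 = \left(u_n - r\sin(\theta)\right)^2 + r^2\cos^2(\theta),
\end{equation}
which isolates the angle dependence into a single shift $r\sin(\theta)$ and a constant offset $r^2\cos^2(\theta)$.

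Next I would observe that as $n$ runs over the array, $u_n$ takes $N$ equally spaced values with step size $d$, ranging over $\left[-\tfrac{Nd}{2}, \tfrac{Nd}{2}\right]$ (using $D=(N-1)d\approx Nd$). For large $N$, i.e.\ small $d$ relative to the aperture, the sum is well approximated by
\begin{equation}
\sum_{n=1}^N \frac{1}{r_n^2} \approx \frac{1}{d}\int_{-Nd/2}^{Nd/2} \frac{\mathrm{d}u}{\left(u - r\sin(\theta)\right)^2 + r^2\cos^2(\theta)}.
\end{equation}
Applying the standard antiderivative $\int \frac{\mathrm{d}v}{v^2+a^2}=\frac{1}{a}\arctan\!\left(\frac{v}{a}\right)$ with $a=r\cos(\theta)$ and the substitution $v=u-r\sin(\theta)$, the integral evaluates to a difference of two arctangents at the endpoints $\pm\tfrac{Nd}{2}-r\sin(\theta)$. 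Dividing each endpoint by $r\cos(\theta)$ produces arguments $\tfrac{Nd}{2r\cos(\theta)}\mp\tan(\theta)$, and the odd symmetry $\arctan(-x)=-\arctan(x)$ converts the difference of arctangents into the \emph{sum} of the two arctangents appearing in Eq.~\eqref{expr_ula}. Finally, multiplying by the prefactor $\frac{r^2}{N}\cdot\frac{1}{d\,r\cos(\theta)}=\frac{r}{Nd\cos(\theta)}$ yields the claimed closed form.

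The main obstacle is justifying the Riemann-sum-to-integral step, together with the identification of the integration limits as $\pm\tfrac{Nd}{2}$. The integrand is a smooth, bounded rational function, so the approximation error is controlled by the spacing $d$ and vanishes as $N\to\infty$ at fixed aperture; nonetheless one must check that the peak of $1/r_n^2$ near $u=r\sin(\theta)$, which is sharpest when $r$ is comparable to the aperture, remains adequately sampled so that the stated expression is accurate precisely in the near-field regime of interest. Everything after this step is routine integration followed by the arctangent symmetry simplification.
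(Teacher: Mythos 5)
Your proof is correct and takes essentially the same route as the paper's Appendix A: both recognize $\sum_{n}1/r_n^2$ as a midpoint-rule Riemann sum, pass to an integral as $N\to\infty$, and evaluate via an arctangent antiderivative with endpoint evaluation and the odd symmetry of $\arctan$. The only (cosmetic) difference is that you complete the square and integrate over the physical coordinate $u\in\left[-\tfrac{Nd}{2},\tfrac{Nd}{2}\right]$, whereas the paper integrates the raw quadratic over the normalized variable $m\in\left[-\tfrac{1}{2},\tfrac{1}{2}\right]$ using the general formula $\int \frac{\mathrm{d}m}{am^2-bm+c}$ — the two are related by the substitution $u=Ndm$ and yield identical computations.
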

	\begin{proof}
		See Appendix \ref{appendix_theorem1}.
	\end{proof}
	It is evident from Eq. \eqref{expr_ula} that  $\mu(r, \theta)$ depends on the array length $Nd$ instead of the individual values of $N$ and $d$, indicating that $\mu(r, \theta)$ is identical for different ULAs with the same length. Moreover, $\mu(r, \theta)$ is independent of the wavelength.
	
	\begin{proposition}
		In the point-to-ULA scenario, $|\theta| = \frac{\pi}{6}$ is an inflection angle in the equi-power line. When $-\frac{\pi}{6} \leq \theta \leq \frac{\pi}{6}$, $\mu(r, \theta)$ is always smaller than one. When $ -\frac{\pi}{2} <\theta < -\frac{\pi}{6}$ or $ \frac{\pi}{6} <\theta < \frac{\pi}{2}$, $\mu(r, \theta)$ will first increase, exceed one, reach its peak, then decrease, and finally approach one as the distance $r$ increases, as shown in Fig.~\ref{fig_ula_line}. Assume $\mu(r,\theta)$ reaches its peak at the distance $r_1$,  where the PWM underestimates the received power the most severely. Define
		\begin{equation}
			r_{2} \triangleq Nd \sqrt{ \frac{\tan^2 (\theta)+1 + 2|\tan(\theta)| \sqrt{\tan^2(\theta)+1}}{4\cos^2(\theta) \left(\tan^2 (\theta)+ 1 \right) \left( 3 \tan^2(\theta)-1 \right)} },
		\end{equation}
		where $\frac{\partial^2 \mu(r, \theta)}{\partial r^2}|_{r=r_2}=0$, then $r_1$ is smaller than and close to $r_2$. 
	\end{proposition}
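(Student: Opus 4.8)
The plan is to collapse the two free parameters into a single auxiliary variable and reduce everything to a convexity analysis of one elementary function. Since $\mu(r,\theta)$ is even in $\theta$, I would assume $\theta\in[0,\tfrac{\pi}{2})$ and set $s=\tan\theta$ together with $w=\frac{Nd}{2r\cos\theta}$, so that Eq.~\eqref{expr_ula} becomes
\[
\mu(r,\theta)=\frac{S(w)}{2w\cos^2\theta},\qquad S(w)=\arctan(w+s)+\arctan(w-s).
\]
As $r$ runs over $(0,\infty)$, $w$ runs over $(0,\infty)$ through a strictly decreasing bijection. The function $S$ is odd with $S(0)=0$, $S'(0)=\frac{2}{1+s^2}=2\cos^2\theta$ and $S(\infty)=\pi$, which immediately gives $\mu\to1$ as $r\to\infty$ and $\mu\to0$ as $r\to0$. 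All the qualitative claims then hinge on the sign of $S''(w)=-2\bigl[\psi(w+s)+\psi(w-s)\bigr]$ with $\psi(x)=x/(1+x^2)^2$, because $S''$ controls both the concavity of $S$ and, as explained below, the monotonicity of $\mu$ in $r$.

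The key algebraic step, and the place I expect the real work to be, is solving $S''(w)=0$ in closed form. Writing the condition as $\psi(w+s)=\psi(s-w)$, clearing denominators and cancelling the spurious common factor (corresponding to the trivial root $w=0$) reduces it to a quadratic in $W=w^2$:
\[
W^2+2(\sigma+1)W-(3\sigma-1)(\sigma+1)=0,\qquad \sigma=s^2=\tan^2\theta.
\]
Its product of roots is $-(3\sigma-1)(\sigma+1)$, so a positive root exists precisely when $\sigma>\tfrac13$, i.e. $|\theta|>\tfrac{\pi}{6}$; this is exactly the inflection angle, since $\tan^2(\tfrac{\pi}{6})=\tfrac13$. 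The positive root is $W=2\sqrt{\sigma(\sigma+1)}-(\sigma+1)=\frac{2\sin\theta-1}{\cos^2\theta}$, and translating $w$ back through $r=\frac{Nd}{2w\cos\theta}$ reproduces the stated $r_2$ after simplification (the radicand collapses because $4\sin^2\theta-1=(2\sin\theta-1)(2\sin\theta+1)$). Thus $r_2$ is \emph{exactly} the inflection point $\partial^2\mu/\partial r^2=0$, not merely an approximation.

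For $|\theta|\le\tfrac{\pi}{6}$ the quadratic has no positive root, and a short sign check at small $w$ (using that $s\le 1/\sqrt3$ lies below the maximiser of $\psi$) shows $S''(w)<0$ for all $w>0$. Hence $S$ is strictly concave, lies below its tangent at the origin, $S(w)<S'(0)w=2w\cos^2\theta$, and therefore $\mu(r,\theta)<1$ for every $r$. For $|\theta|>\tfrac{\pi}{6}$ the same check gives $S''>0$ on $(0,w_2)$ and $S''<0$ on $(w_2,\infty)$. I would then study $M(w)=wS'(w)-S(w)$, whose derivative is simply $M'(w)=wS''(w)$; since $M(0)=0$ and $M(\infty)=-\pi$, $M$ increases to a positive maximum at $w_2$ and then decreases to $-\pi$, so it has a unique zero $w_1\in(w_2,\infty)$. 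Because $\partial\mu/\partial r$ has the sign of $-M(w)$, this proves $\mu$ has a single interior peak, located at $r_1=\frac{Nd}{2w_1\cos\theta}$, and $w_1>w_2$ yields $r_1<r_2$.

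That the peak value exceeds one (so the PWM underestimates there) follows from the large-$r$ expansion $\mu=1+\frac{(4\sin^2\theta-1)(Nd)^2}{12r^2}+O(r^{-4})$, whose correction is positive for $|\theta|>\tfrac{\pi}{6}$. Finally, to argue that $r_1$ is \emph{close} to $r_2$ I would estimate how far $M$ must travel from $w_2$ to its zero: expanding near the threshold $\theta=\tfrac{\pi}{6}$, where a fourth-order truncation in $1/r$ already captures the leading behaviour, gives the fixed ratio $r_1/r_2\to\sqrt{3/5}$, and the general closeness is precisely the behaviour displayed in Fig.~\ref{fig_ula_line}. The main obstacle throughout is the exact reduction of $S''(w)=0$ to a quadratic with controllable roots, since this single computation simultaneously produces the inflection angle $\pi/6$, the closed form for $r_2$, and the global sign pattern of $S''$ on which all the monotonicity arguments rest.
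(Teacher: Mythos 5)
Your proposal is correct, and while it reaches the same milestones as the paper, it gets there by a genuinely cleaner route. The paper (Appendix B) differentiates the closed form of $\mu(r,\theta)$ twice in $r$ and analyzes the numerator $f_1(r,\theta)$, a quadratic in $r^2$ whose constant term $-N^4d^4$ and leading coefficient $\propto(3\tan^2\theta-1)$ force the same positive-root dichotomy at $\tan^2\theta=\tfrac13$ that you obtain; your quadratic $W^2+2(\sigma+1)W-(3\sigma-1)(\sigma+1)=0$ is exactly the paper's $f_1=0$ after the substitution $W=w^2=\bigl(\tfrac{Nd}{2r\cos\theta}\bigr)^2$, and I verified your simplification is exact: the paper's unwieldy radical collapses to $r_2=\tfrac{Nd}{2\sqrt{2|\sin\theta|-1}}$, matching its numerical example ($r_2=0.371$ m at $\theta=\pi/3$, $Nd=0.635$ m). Your normalization buys several things the paper's brute-force calculus does not: the chain-rule identities $\mu'(r)=-\tfrac{M(w)}{Nd\cos\theta}$ and $\mu''(r)\propto w^2S''(w)$ make the peak/inflection ordering $r_1<r_2$ an immediate consequence of $M'=wS''$ with $M(0)=0$, $M(\infty)=-\pi$ (the paper argues the same sign pattern, but only implicitly and with the labels $r_1,r_2$ swapped relative to the proposition); the tangent-line bound $S(w)<S'(0)w$ dispatches the $|\theta|\le\pi/6$ case in one line, correctly covering the boundary angle via $Q(W)=W(W+\tfrac83)>0$ at $\sigma=\tfrac13$ (your appeal to the maximizer of $\psi$ is loose exactly at equality, so lean on the quadratic's sign there); and the expansion $\mu=1+\tfrac{(4\sin^2\theta-1)(Nd)^2}{12r^2}+O(r^{-4})$, which I checked, cleanly certifies that $\mu$ approaches one from above past the peak. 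Your asymptotic ratio $r_1/r_2\to\sqrt{3/5}\approx0.775$ as $\theta\to\tfrac{\pi}{6}^{+}$ also checks out from the two-term truncation ($r_1^2=-2B/A$, $r_2^2=-10B/(3A)$) and is consistent with the paper's simulated ratio $0.274/0.371\approx0.74$ at $\theta=\pi/3$ — this actually quantifies the word \emph{close} better than the paper, which offers only the single numerical example; just be aware that, like the paper, your argument for closeness at general $\theta$ away from the threshold still rests on numerics rather than a uniform bound.
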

	\begin{proof}
		See Appendix \ref{appendix_prop1}.
	\end{proof}
	{For instance, when $\theta=\pi/3$, $N=127$, and $d=0.005$ m, we have $r_2=0.371$ m. Through simulation, we find $r_1\approx 0.274$ m.} It is observed that  $r_1$ is smaller than and close to $r_2$. 
	\begin{figure}[t]
		\begin{centering}
			\includegraphics[width=.45\textwidth]{./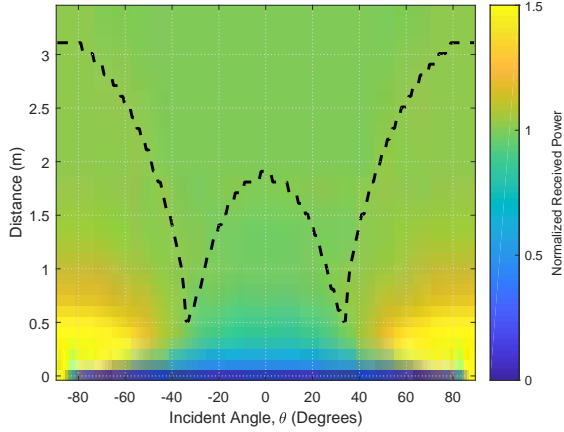}
			\caption{  {Point-to-ULA scenario: The normalized received power versus the central incident angle in the case of only an LoS path, where the dotted line denotes the equi-power line, $N=127$, and $\lambda=0.01$ m.}}\label{fig_ula_line}
		\end{centering}
	\end{figure}
	\begin{figure}[t]
		\begin{centering}
			\includegraphics[width=.4\textwidth]{./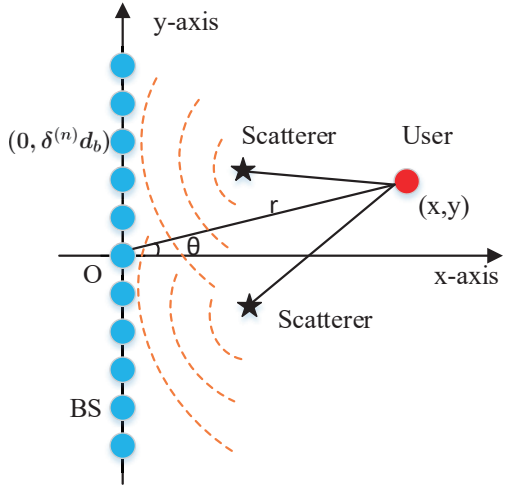}
			\caption{ {The system layout of a point to a ULA with multiple scatterers under the SWM.}}\label{fig_ula_scatters}
		\end{centering}
	\end{figure}
	\begin{figure}[t]
		\begin{centering}
			\includegraphics[width=.45\textwidth]{./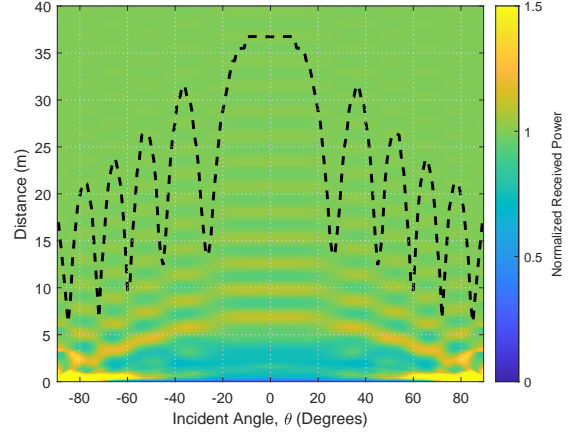}
			\caption{  {Point-to-ULA scenario: The threshold distance versus normalized received power versus the central incident angle in the case of two scatterers and an LoS path with $\zeta=5^\circ$, $|\alpha|=0.5$, $N=127$, and $\lambda=0.01$ m, where the dotted line denotes the equi-power line.}}\label{fig_ula_k0}
		\end{centering}
	\end{figure}
	\begin{figure}[t]
		\begin{centering}
			\includegraphics[width=.45\textwidth]{./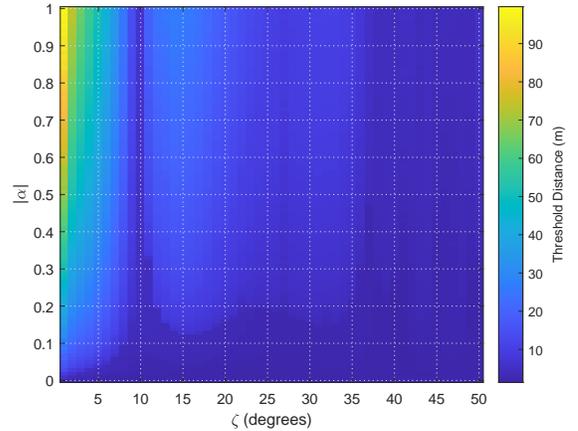}
			\caption{ {Point-to-ULA scenario: The threshold distance versus $|\alpha|$ and $\zeta$ with two scatterers and an LoS path, where $\theta=0^\circ$, $N=127$, and $\lambda=0.01$ m.}}\label{fig_ula_k-20}
		\end{centering}
	\end{figure}
	\begin{figure}[t]
		\begin{centering}
			\includegraphics[width=.45\textwidth]{./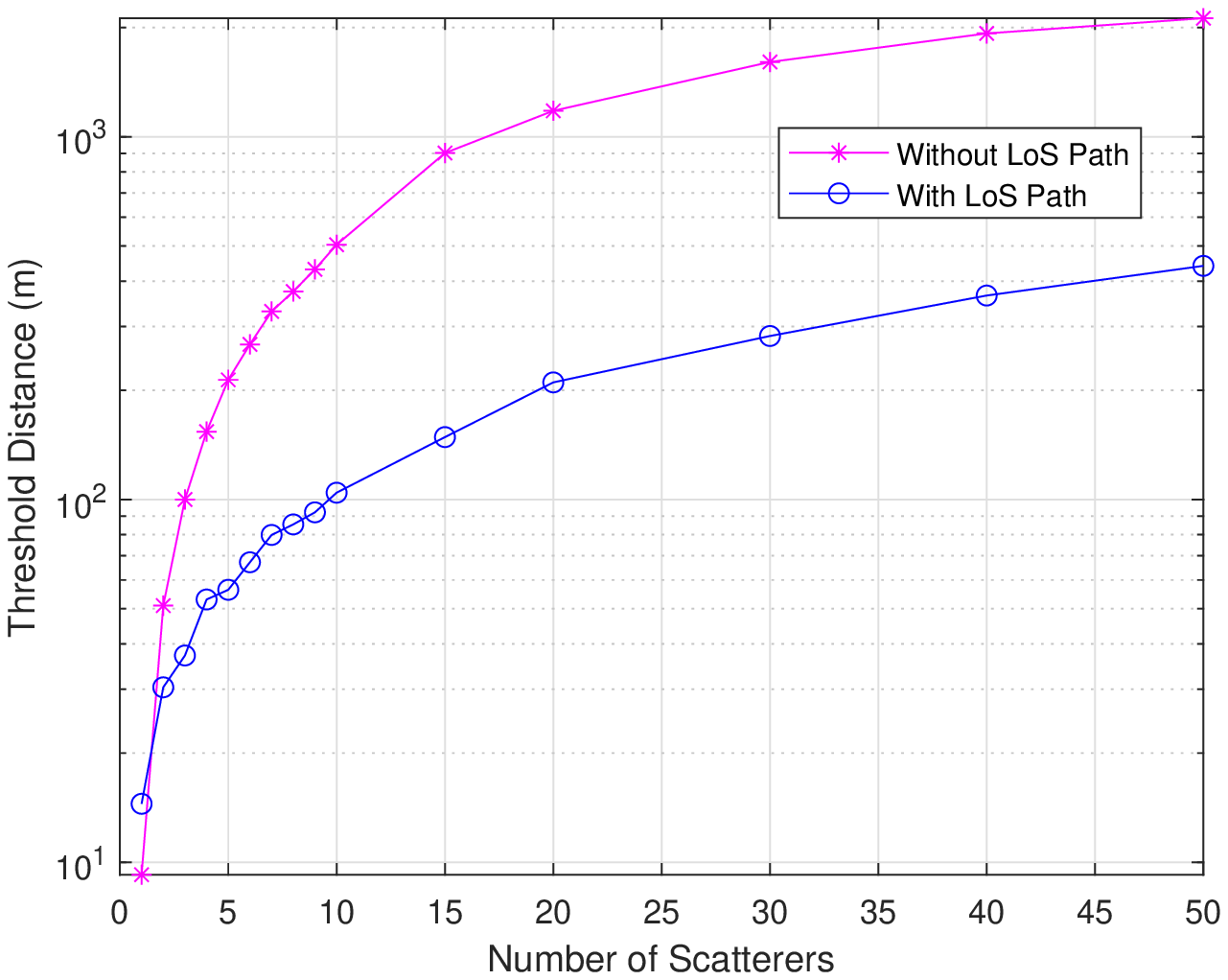}
			\caption{ {Point-to-ULA scenario: The threshold distance versus the number of scatterers where $\theta=0^\circ$, $N=127$, and $\lambda=0.01$ m.}}\label{fig_sca_num}
		\end{centering}
	\end{figure}
	
	Now, let us take a closer look at the equi-power line. When $\theta=0$, the coefficient $\mu(r, 0)$ can be simplified as
	\begin{equation} \label{mu_ula0}
		\mu(r,0) = \frac{2r}{Nd} \arctan{\left(\frac{Nd}{2r}\right)} .
	\end{equation}
	When $\delta_\Delta$ is set to 0.99, the corresponding equi-power line is $r(0)\approx 2.86Nd \approx 2.86D$. Fig.~\ref{fig_ula_line} illustrates the normalized received power and the equi-power line $r(\theta)$ for $\theta$ ranging from $-\frac{\pi}{2}$ to $\frac{\pi}{2}$, where $N=127$, $\lambda=0.01$ m, and $d=\frac{\lambda}{2}$. The dotted line denotes the equi-power line, where $\delta_\Delta$ is set to  0.99 when $\mu(r, \theta)$ is always smaller than one, and $\delta_\Delta$ is set to 1.01 otherwise. It can found that the equi-power line has a pseudo inflection point at $|\theta|>\frac{\pi}{6}$ due to the non-ideal threshold $\delta_\Delta$ which is not exactly one. When $-\frac{\pi}{6} \leq \theta \leq \frac{\pi}{6}$, the received power under the PWM is always a upper bound of that under the SWM. When $ \frac{\pi}{6} <|\theta| < \frac{\pi}{2}$, the received power corresponding to the PWM is first a upper bound, then a lower bound of that under the SWM as the distance increases. 
	\begin{table*}[t]
		\centering
		\caption{ {Comparison among different demarcations between applicable regions for SWM and PWM for the point-to-ULA scenario, where $\theta=0^\circ$, $\lambda=0.01$ m, $N=127$, and $D=0.63$ m.}}
		\begin{tabular}{cc}\hline
			\text{Different demarcations } & \text{Values (m)} \\ \hline
			\text { Classical Rayleigh distance   } & $\frac{2D^2}{\lambda}=79.38$  \\ \hline
			\text{Critical distance \cite{lu2021does}} & 0.70 \\ \hline		
			\text { Effective Rayleigh distance  \cite{cui2021near} } &  $ \epsilon \frac{2D^2}{\lambda}=29.13$ \\ \hline
			\text {The distance in \cite{bjornson2021primer}  } & $\frac{1}{10}\frac{2D^2}{\lambda}=7.94$ \\ \hline
			\text{Proposed equi-power line (LoS environment without scatterers)} & $2.86D=1.80$ \\ \hline
			{\text{Proposed equi-power line (LoS environment with 7 scatterers)}} &  {$79.87$} \\ \hline
			{\text{Proposed equi-power line (NLoS environment with 3 scatterers)}} &  {$100.04$} \\ \hline
		\end{tabular}
		\label{comp_ula}
	\end{table*}
	
	{So far, we have been focusing on the single LoS path between the user and the ULA. Now, let us consider the scenario with multiple scatterers that generate multipath as illustrated in Fig. \ref{fig_ula_scatters}. Let $\mathbf{h}^s \in \mathbb{C}^{N\times 1}$ denote the channel between the ULA and the user under the SWM, we have
		\begin{equation} \label{h_sca}
			{{h}^s_n} = \frac{\lambda}{4\pi r_n} e^{-j 2\pi \frac{r_n}{\lambda}}+ \sum \limits_{\ell=1}^L \frac{\alpha_\ell \lambda}{4\pi r_{n\ell} d_\ell} e^{-j 2\pi \frac{(r_{n\ell}+d_\ell)}{\lambda}} ,
		\end{equation}
		where $h^s_n$ is the $n$-th entry of $\mathbf{h}^s$, $L$ is the number of scatterers, $\alpha_\ell$ denotes the attenuation and phase shift caused by the $\ell$-th scatterer, $r_{n\ell}$ denotes the distance between the $\ell$-th scatterer and the $n$-th antenna, and $d_\ell$ denotes the distance between the user and the $\ell$-th scatterer. 
		The channel in \eqref{h_sca} under the PWM can be approximated by
		\begin{equation} \label{h_sca_pwm}
			{\overline{h}^s_n} = \frac{\lambda}{4\pi \overline{r}_n} e^{-j 2\pi \frac{\overline{r}_n}{\lambda}}+ \sum \limits_{\ell=1}^L \frac{\alpha_\ell \lambda}{4\pi \overline{r}_{n\ell} d_\ell} e^{-j 2\pi \frac{(\overline{r}_{n\ell}+d_\ell)}{\lambda}} ,
		\end{equation}
		where $\overline{r}_{n\ell}$ is the first order Taylor expansion of $r_{n\ell}$. It is challenging to analytically calculate the channel gain of Eq. \eqref{h_sca} and Eq. \eqref{h_sca_pwm} under the MRC. Therefore, simulations are carried out to evaluate the equi-power line. }
	
	Without loss of generality, we first consider  two scatterers with polar coordinates of $(r/2, \theta+ \zeta)$ and $(r/2, \theta-\zeta)$, respectively. We assume that the attenuation and phase shifts of the two scatterers are the same, i.e.,  $\alpha_1=\alpha_2$. The phase shifts are assumed to follow the uniform distribution between $-\pi$ and $\pi$, i.e., $\arg \{\alpha_1\}=\arg \{\alpha_2\} \sim \mathcal{U}(-\pi, \pi)$.
	After calculating the channel gain under SWM and PWM, we can obtain the coefficient $\mu(r,\theta)$ and the corresponding equi-power line, as shown in Fig. \ref{fig_ula_k0} and  {Fig. \ref{fig_ula_k-20}} where $\lambda=0.01$ m, $d=\frac{\lambda}{2}$, and $N=127$. The simulations are averaged over 100 independent realizations of scattering phase shifts.  When there is only a single LoS path between the user and ULA, the normalized received power is related to the distance, the array aperture, and the central incident angle, and is independent of phase. However, when multiple scatterers are considered, the phases of different paths cannot be cancelled out under the MRC as in the single-LoS-path case. Thus, the phases across the antennas will influence the normalized received power, leading to a much larger threshold distance, as shown in Fig. \ref{fig_ula_k0} when compared with Fig. \ref{fig_ula_line}.   {Fig. \ref{fig_ula_k-20} shows that the threshold distance decreases as $|\alpha|$ decreases since it becomes more akin to the single-LoS-path case. In addition, when the scatterers disperse, that is to say, $\zeta$ becomes large, the threshold distance is likely to decrease in general. The scattered signals strengthen each other in some places and weaken each other in others, as shown in Fig. \ref{fig_ula_k0}. When the scatterers are close together, they strengthen each other more. Thus, the threshold distance beyond which the normalized received power is always smaller than 1.01 becomes larger  in general. }
	
	{Then, we consider the circumstance where one or more scatterers are randomly distributed within a semicircle with radius $r$ with respect to the central element of the ULA, where $|\alpha_\ell|\sim \mathcal{U}(0, 1)$ and $\arg \{\alpha_\ell\} \sim \mathcal{U}(-\pi, \pi), \forall \ell$. Fig. \ref{fig_sca_num}  illustrates the threshold distance versus the number of scatterers with each data point averaged over 2000 independent realizations of the scatterer locations, and scattering attenuation and phase shifts , where $\theta=0^\circ$, $N=127$,  $\lambda=0.01$ m, and $d=\frac{\lambda}{2}$.  The curve `With LoS Path' indicates the LoS environment generated according to Eq. \eqref{h_sca}, while the curve `Without LoS Path' corresponds to the non-LoS (NLoS) environment with only scattered paths. It can be found that the threshold distance increases with the number of scatterers in both LoS and NLoS environment. For a given number of scatterers, the threshold distance in the NLoS environment is larger than that under the LoS condition when the number of scatterers exceeds two. The increased threshold distance with the number of scatterers may be ascribed to the fact that the scatterers close to the ULA are likely to be in the near field of the ULA where the SWM is applicable, whose effect needs to be overwritten by the distant scatterers such that the overall mixed wavefront approximates that formed by multiple plane waves.    As the number of scatterers increases, the number of close-in (with respect to the ULA) scatterers also increases, hence the distant scatterers need to be farther away to overcome the effect of the close-in scatterers,  so that the overall far-field effect can be presented.}
	
	\begin{figure}[t]
		\begin{centering}
			\includegraphics[width=.4\textwidth]{./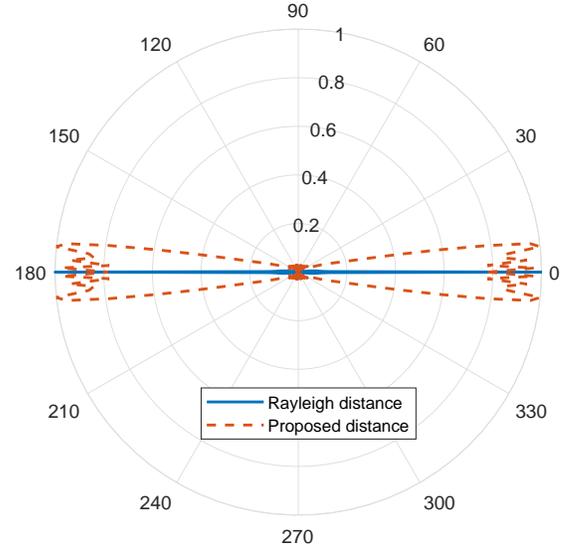}
			\caption{{ Point-to-ULA scenario: The beamforming pattern with the Rayleigh distance and the proposed distance in the case of only LoS path, where $\lambda=0.01$ m, and $N=127$.}}\label{fig_beampattern}
		\end{centering}
	\end{figure}
	Table \ref{comp_ula} compares different demarcations between applicable regions for SWM and PWM when $\theta=0^\circ$, $\lambda=0.01$ m, $N=127$, $D=0.63$ m, and $\epsilon=0.367$ \cite{cui2021near}. It can be found that the proposed equi-power line for a single LoS path is  smaller than most of the other boundaries. In fact, under MRC, the received power is affected by the variations of the signal amplitude, and is independent of the variations of the signal phase. Therefore, the classical Rayleigh distance overestimates the distance where the received power obtained by the SWM is nearly equal to that obtained by the PWM. The variations of the signal amplitude is also considered in \cite{lu2021does}, which cares about the power ratio between the weakest and strongest array elements. However, the received power by the whole array is considered in the proposed equi-power line, which results in different demarcations.  {The proposed equi-power line  is close to the classical Rayleigh distance under the LoS environment with 7 scatterers or the NLoS environment with 3 scatterers.} In addition, Fig. \ref{fig_beampattern} illustrates the beamforming pattern with the Rayleigh distance and the proposed distance in the scenario of Fig. \ref{fig_ula}, where $N=127$, $\theta=0^\circ$, $\lambda=0.01$ m, $d=\frac{\lambda}{2}$, and $r$ is set to the Rayleigh distance and the proposed distance, respectively.  It can be found that the beamforming pattern with the Rayleigh distance is sharp at $0^\circ$, which shows that it is indeed in the far field. However, the beamforming pattern with the proposed distance has large gains between $- 6^\circ$ and $6^\circ$.

	\begin{figure}[t]
		\begin{centering}
			\includegraphics[width=.45\textwidth]{./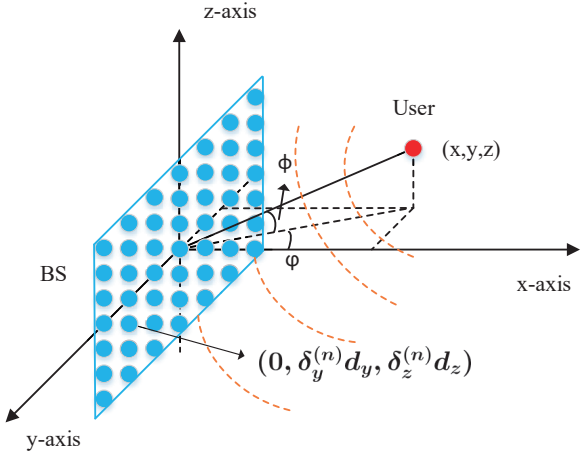}
			\caption{The system layout of a point to a UPA under the SWM.}\label{fig_upa}
		\end{centering}
	\end{figure}
	\section{Equi-Power Surface of a UPA} \label{sec_upa}
	We now consider the equi-power surface of an $N=N_y\times N_z$ UPA, as shown in Fig.~\ref{fig_upa}. Assume the UPA lies in the YZ-plane, and the midpoint of the UPA lies at $(0,0,0)$. The coordinate of the $n_y n_z$-th antenna is $(0, \delta_y^{(n)} d_y, \delta_z^{(n)} d_z)$, where $\delta_y^{(n)} = n_y -\frac{N_y-1}{2}$ with $n_y=0,1,\ldots,N_y-1$, $\delta_z^{(n)} =n_z -\frac{N_z-1}{2}$ with $n_z=0,1,\ldots,N_z-1$, $d_y$ denotes the antenna spacing along the Y direction, and $d_z$ denotes the antenna spacing along the Z direction. A single-antenna user is located at $(x,y,z)$, where its polar coordinate is $(r, \phi,\varphi)$. Thus, we have
	\begin{equation}
		\left\{\begin{array}{l}
			x= r\cos(\phi) \cos(\varphi), \\
			y= r\cos(\phi) \sin(\varphi), \\
			z= r\sin(\phi).
		\end{array}\right.
	\end{equation}
	\begin{figure*}
		\begin{equation} \label{upa_ori}
			\begin{aligned}
				\mu(r, \phi,\varphi) &=\frac{r^2}{N} \sum \limits_{n_y=1}^{N_y} \sum \limits_{n_z=1}^{N_z} \frac{1}{r_{n_y n_z}^2} \\
				&= \frac{r^2}{N} \sum \limits_{n_y=1}^{N_y} \sum \limits_{n_z=1}^{N_z}  \frac{1} {r^2- 2r \cos (\phi) \sin (\varphi) \delta_y^{(n)} d_y -2r \sin (\phi) \delta_z^{(n)} d_z + (\delta_y^{(n)} d_y)^2+ (\delta_z^{(n)}d_z)^2}  \\
				&= \frac{r^2}{N}  \sum \limits_{n_y=-\frac{N_y-1}{2}}^{ \frac{N_y-1}{2} } \sum \limits_{n_z=-\frac{N_z-1}{2}}^{\frac{N_z-1}{2}}  \frac{1}{r^2- 2r \cos (\phi) \sin (\varphi) n_y d_y -2r \sin (\phi) n_z d_z + n_y^2 d_y^2 + n_z^2 d_z^2 } \\
				&\overset{N \rightarrow \infty}{=}  \iint \limits_A \frac{r^2}{ r^2- 2r \cos (\phi) \sin (\varphi) N_y d_y {m_y}+ N_y^2 d_y^2 {m_y}^2  -2r \sin (\phi) N_z d_z {m_z} + N_z^2 d_z^2 {m_z}^2
				}  \mathrm{d} {m_y}  \mathrm{d} {m_z}.
			\end{aligned}
		\end{equation}
		\begin{equation} \label{mu_r00}
			\mu(r, 0,0)= \frac{2\pi r^{2}}{N_y d_y N_z d_z} \ln \left( \frac{N_z d_z \sqrt{\frac{N_y^2 d_y^2}{\pi r^2}+1} + N_y d_y \sqrt{\frac{N_z^2 d_z^2}{\pi r^2}+1}} {N_y d_y+ N_z d_z} \right).
		\end{equation}
		\begin{equation} \label{mu_rbeta}
			\mu_0 (r,\beta) =  \frac{\pi r^{2}}{2 D^{2}} \left\{ \ln \left[\frac{2 D^{2} \sqrt{\frac{D^{4}}{\pi^{2}}+\frac{\left(4 \beta -2\right) r^{2} D^{2}}{\pi}+r^{4}}+\left(\frac{2 D^{4}}{\pi}+\left(4 \beta -2\right) r^{2} D^{2}\right)}{2 D^{2} \sqrt{\frac{D^{4}}{\pi^{2}}+\frac{\left(4 \beta -2\right) r^{2} D^{2}}{\pi}+r^{4}}-\left(\frac{2 D^{4}}{\pi}+\left(4 \beta -2\right) r^{2} D^{2}\right)}\right]  +\ln \left[\frac{1- \beta }{ \beta }\right]\right\}.
		\end{equation}
	\end{figure*} 
	Analogous to the ULA structure in Section \ref{sec_ula}, the ratio of  {$G_\mathrm{near-field}$ to $G_\mathrm{far-field}$} under the UPA structure can be represented as
	\begin{equation}
		{\mu (r, \phi,\varphi) \triangleq \frac{G_\mathrm{near-field}} {G_\mathrm{far-field}}} = \frac{r^2}{N} \sum \limits_{n_y=1}^{N_y} \sum \limits_{n_z=1}^{N_z} \frac{1}{r_{n_y n_z}^2},
	\end{equation}
	where $r_{n_y n_z}$ denotes the distance between the $n_y n_z$-th antenna and the user, satisfying
	\begin{equation}
		\begin{aligned}
			r_{n_y n_z}^2 =&\; \left(r\cos(\phi) \cos(\varphi)\right)^2 + \left(r \sin(\phi)-\delta_z^{(n)} d_z \right)^2 \\
			& + \left(r\cos(\phi) \sin(\varphi)-\delta_y^{(n)} d_y \right)^2 \\
			=&\;  r^2- 2r \cos (\phi) \sin (\varphi) \delta_y^{(n)} d_y  \\
			& -2r \sin (\phi) \delta_z^{(n)} d_z + (\delta_y^{(n)} d_y)^2+ (\delta_z^{(n)}d_z)^2.
		\end{aligned}
	\end{equation}
	
	Similar to the derivation of the ULA structure in Appendix \ref{appendix_theorem1}, $\mu(r, \phi,\varphi)$ can be expressed as Eq. \eqref{upa_ori}, where $A=\{(m_y,m_z)|-\frac{1}{2}\leq m_y \leq \frac{1}{2}, -\frac{1}{2}\leq m_z \leq \frac{1}{2} \}$ is the integration area. The coefficient $\mu(r,\phi,\varphi)$ represents the normalized received power. The solutions $r(\phi,\varphi)$ of $\mu(r,\phi,\varphi)= \delta_\Delta$ denote approximately the same normalized received power at $(r,\phi, \varphi)$ between the SWM and PWM, and constitute a surface. Therefore, we name the solutions $r(\phi,\varphi)$ as \emph{equi-power surface}. In the following, we will first explore the closed-form expression of $\mu(r,\phi,\varphi)$, then investigate the equi-power surface.

	It is extremely difficult, if not impossible, to derive the closed-form expression of  $\mu(r, \phi,\varphi)$ in Eq. \eqref{upa_ori}. Fortunately, the closed-form expression of $\mu(r, \phi,\varphi)$ is available when considering the uniform circular planar array (UCPA) or uniform elliptical planar array (UEPA) structure, which will be shown next.
	\begin{theorem}
		In the single point to a UEPA scenario, when $\phi=\varphi=0$, the normalized received power $\mu(r,0,0)$ can be represented as Eq. \eqref{mu_r00}.
		
		In the single point to a UCPA scenario, assume $N_y d_y=N_z d_z \triangleq D$,  $\mu(r,\phi,\varphi)$ can be expressed as
		\begin{equation} \label{mu_rthetaphi}
			\mu(r, \phi,\varphi) =
			\begin{cases}
				\frac{\pi r^{2}}{D^2} \ln \left( \frac{D^2}{\pi r^2}+1 \right),  & \phi=\varphi=0, \\
				\mu_0 (r,\beta),
				& \phi \neq 0 \; \text{or}\; \varphi \neq 0,
			\end{cases}
		\end{equation}
		where $\mu_0(r,\beta)$ is shown in Eq. \eqref{mu_rbeta} and $\beta = \cos ^{2}(\phi) \cos ^{2}(\varphi)$.
	\end{theorem}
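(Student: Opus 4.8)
The plan is to pass to the continuum limit ($N\to\infty$) exactly as in Eq.~\eqref{upa_ori}: the normalized sum $\frac{r^2}{N}\sum_{n_y,n_z} r_{n_yn_z}^{-2}$ becomes the area average $\frac{r^2}{|A|}\iint_A \frac{\mathrm{d}y\,\mathrm{d}z}{y^2+z^2-2r\cos\phi\sin\varphi\,y-2r\sin\phi\,z+r^2}$ over the aperture $A$ in the physical $(y,z)$-plane, with $|A|=N_yd_y\,N_zd_z$. For the UEPA/UCPA I take $A$ to be the ellipse/disk carrying the same elements at the same spacing as the UPA, hence the same area and aspect ratio; this forces the semi-axes $N_yd_y/\sqrt{\pi}$ and $N_zd_z/\sqrt{\pi}$, which is the origin of every $\pi$ in Eqs.~\eqref{mu_r00}--\eqref{mu_rthetaphi}. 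The first algebraic step is to complete the square in the denominator, turning it into $(y-r\cos\phi\sin\varphi)^2+(z-r\sin\phi)^2+r^2\beta$ with $\beta=\cos^2(\phi)\cos^2(\varphi)$; the essential point is that the constant term collapses to exactly $r^2\beta$. Writing $c^2:=r^2\beta$ for this constant and $e^2:=r^2(1-\beta)$ for the squared shift magnitude, I will repeatedly exploit the two identities $c^2+e^2=r^2$ and $c^2-e^2=r^2(2\beta-1)$.

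\emph{UEPA at $\phi=\varphi=0$ (and the UCPA zero-angle case).} Here the shift vanishes and the integrand is the radial $1/(y^2+z^2+r^2)$ over an ellipse. I would use ordinary polar coordinates centred at the origin and integrate radially first, $\int_0^{\rho_b(\psi)}\frac{\rho\,\mathrm{d}\rho}{\rho^2+r^2}=\tfrac12\ln\frac{\rho_b^2(\psi)+r^2}{r^2}$, where $\rho_b(\psi)$ is the elliptical boundary. The numerator and denominator of the resulting logarithm are both of the form $P\cos^2\psi+Q\sin^2\psi$, so the angular integration closes via the identity $\int_0^{2\pi}\ln(P\cos^2\psi+Q\sin^2\psi)\,\mathrm{d}\psi=4\pi\ln\frac{\sqrt P+\sqrt Q}{2}$. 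Substituting the semi-axes and simplifying yields Eq.~\eqref{mu_r00}; setting $N_yd_y=N_zd_z=D$ merges the two square roots into one and gives the first line of Eq.~\eqref{mu_rthetaphi}.

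\emph{UCPA at general angles.} Now the disk is rotationally symmetric but the integrand is off-centre. I would keep polar coordinates centred at the disk centre and, using the disk's symmetry, rotate so that the shift lies on one axis; the denominator then reads $\rho^2-2e\rho\cos\psi+(e^2+c^2)$. The decisive move is to integrate over $\psi$ \emph{first} through $\int_0^{2\pi}\frac{\mathrm{d}\psi}{A-B\cos\psi}=\frac{2\pi}{\sqrt{A^2-B^2}}$ with $A=\rho^2+e^2+c^2$ and $B=2e\rho$; since $A-B=(\rho-e)^2+c^2>0$ whenever $c^2>0$, the hypothesis $A>|B|$ holds for every $\rho$, so no case split on whether the integrand's pole lies inside or outside the aperture is needed. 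This leaves the one-dimensional radial integral $\int_0^{R^2}\frac{\mathrm{d}t}{\sqrt{t^2+2(c^2-e^2)t+(c^2+e^2)^2}}$ after $t=\rho^2$, a standard form evaluating to a logarithm. Inserting $c^2\pm e^2$ reproduces the radicand $W$ of Eq.~\eqref{mu_rbeta}, and the difference-of-squares identity $W-P^2=4c^2e^2$ (with $P=R^2+c^2-e^2$) converts the compact form $\ln\frac{P+\sqrt W}{2c^2}$ into the stated $\tfrac12\big[\ln\frac{\sqrt W+P}{\sqrt W-P}+\ln\frac{1-\beta}{\beta}\big]$.

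I expect the general-angle UCPA computation to be the main obstacle. The naive approach---centring the polar frame at the integrand's pole---forces a distinction according to whether that pole sits inside or outside the aperture and produces an awkward angular integral of $\ln(\rho_+^2+c^2)$; the whole argument hinges instead on integrating the angle first, which is clean only because the constant $c^2=r^2\beta$ is strictly positive. The remaining work is bookkeeping: verifying that $c^2+e^2=r^2$ and $c^2-e^2=r^2(2\beta-1)$ make every intermediate radical collapse to the exact expressions in Eqs.~\eqref{mu_r00} and \eqref{mu_rbeta}, and fixing the constant of integration through the boundary term at $t=0$, which uses $\sqrt{(c^2+e^2)^2}=c^2+e^2$.
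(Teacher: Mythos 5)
Your proposal is correct, and on the part you rightly identify as decisive---the general-angle UCPA integral---it coincides in substance with the paper's Appendix C: both exploit the circular symmetry to integrate the angle \emph{first}, you via $\int_0^{2\pi}\frac{\mathrm{d}\psi}{A-B\cos\psi}=\frac{2\pi}{\sqrt{A^2-B^2}}$ after rotating the shift onto one axis, the paper via the general antiderivative of $\int\frac{\mathrm{d}\theta}{a\cos\theta+b\sin\theta+c}$ without rotating (a cosmetic difference), and both then reduce the radial integral to a logarithm---the paper through the quartic antiderivative $\int\frac{\rho\,\mathrm{d}\rho}{\sqrt{a\rho^4+b\rho^2+c}}$ in Eq.~\eqref{prove_ucpa3}, you through $t=\rho^2$ and an arcsinh-type primitive reconciled by $W-P^2=4c^2e^2$, which I verified does convert $\ln\frac{P+\sqrt{W}}{2c^2}$ into the paper's form \eqref{mu_rbeta}. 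Your completing-the-square bookkeeping $c^2=r^2\beta$, $e^2=r^2(1-\beta)$ is exactly the paper's trigonometric collapse $\cos^2(\phi)\sin^2(\varphi)+\sin^2(\phi)=1-\beta$ in cleaner clothing, and your observation that $A-B=(\rho-e)^2+c^2>0$ removes any pole-location case split makes explicit a positivity condition ($0<\beta<1$) that the paper's arctan formula needs but leaves implicit. Where you genuinely depart is the zero-angle UEPA case: the paper again integrates the angle first with the elementary antiderivative of $\int\frac{a}{b+c\cos^2\theta}\,\mathrm{d}\theta$ followed by $\int\frac{\rho\,\mathrm{d}\rho}{\sqrt{a+b\rho^2}\sqrt{a+c\rho^2}}$ (Eqs.~\eqref{prove_upa_mu1}--\eqref{prove_upa_mu3}), whereas you integrate radially first and close the angular integral with the classical identity $\int_0^{2\pi}\ln\left(P\cos^2\psi+Q\sin^2\psi\right)\mathrm{d}\psi=4\pi\ln\frac{\sqrt{P}+\sqrt{Q}}{2}$; I checked that your route, including the $-\ln r^2$ boundary term from the radial integral, reproduces the argument of the logarithm in Eq.~\eqref{mu_r00} exactly. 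The paper's order buys elementarity (every step is a table antiderivative), while yours buys a shorter angular computation at the price of a nonelementary though standard (Jensen/harmonic-function) identity; the only loose ends in your sketch are bookkeeping you already flagged, namely the factor $\frac{1}{2}$ from $t=\rho^2$ and the aperture normalization $|A|=D^2=\pi R^2$ that together produce the prefactor $\frac{\pi r^2}{2D^2}$ of Eq.~\eqref{mu_rbeta}.
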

	\begin{proof}
		See Appendix \ref{appendix_theorem2}.
	\end{proof}
	It can be observed that both Eq. \eqref{mu_r00} and Eq. \eqref{mu_rthetaphi} are independent of the wavelength, and are determined by the array length $Nd$ rather than the individual values of $N$ and $d$.
	\begin{proposition} \label{pro_upa1}
		In the single point to a UCPA scenario, $\cos^2 (\phi) \cos^2 (\varphi) =\frac{1}{2}$ is a dividing curve. When $\cos^2 (\phi) \cos^2 (\varphi) \geq \frac{1}{2}$, $\mu(r,\phi,\varphi)$ is always smaller than one; when $\cos^2 (\phi) \cos^2 (\varphi) < \frac{1}{2}$, $\mu(r,\phi,\varphi)$ will first increase, and then decrease to approach one as the distance $r$ increases. Assume $\mu(r,\phi,\varphi)$ reaches its peak at the distance $r_1$, where the PWM underestimates the received power the most severely. Define
		\begin{equation}
			\begin{aligned}
				r_2 \triangleq D \sqrt{\frac{ 10+ \sqrt{100-9 (4\cos^2 (\phi) \cos^2 (\varphi)-2)^2 }}{9\pi (2-4\cos^2 (\phi) \cos^2 (\varphi))}},
			\end{aligned}
		\end{equation}
		which is close to the point where the second partial derivative of $\mu(r,\phi,\varphi)$ over $r$ is zero. Then, $r_1$ is smaller than and close to $r_2$. 
	\end{proposition}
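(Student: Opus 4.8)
The plan is to reduce the closed form $\mu_0(r,\beta)$ of Eq.~\eqref{mu_rbeta} to a single-variable problem by introducing the normalized distance $\rho=\pi r^{2}/D^{2}$, the parameter $k=4\beta-2$, and $S(\rho)=\sqrt{\rho^{2}+k\rho+1}$, under which Eq.~\eqref{mu_rbeta} should collapse to
\[
\mu_0=\frac{\rho}{2}\,f(\rho),\qquad f(\rho)=\ln\!\frac{2S+2+k\rho}{2S-2-k\rho}+\ln\frac{1-\beta}{\beta}.
\]
Here $\mu_0\to0$ as $\rho\to0^{+}$ and $\mu_0\to1$ as $\rho\to\infty$. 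Expanding $S$ and the logarithm to second order in $1/\rho$, the two $\ln\frac{1-\beta}{\beta}$ contributions cancel at leading order and the residual term gives $\mu_0-1\sim\frac{(1-2\beta)D^{2}}{2\pi r^{2}}$. Since the sign of this tail flips exactly at $\beta=\tfrac12$, this both identifies $\cos^{2}(\phi)\cos^{2}(\varphi)=\tfrac12$ as the dividing curve and shows that $\mu_0$ approaches $1$ from below when $\beta\ge\tfrac12$ and from above when $\beta<\tfrac12$.

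The heart of the argument is a global monotonicity analysis, for which the key step is to differentiate $f$. I expect a substantial cancellation to yield the compact identity $f'(\rho)=-\tfrac{2}{\rho S}$, whence $\mu_0'=\tfrac12\bigl(f-\tfrac{2}{S}\bigr)=:\tfrac12\,g(\rho)$, and a second differentiation to give the equally clean $g'(\rho)=-\frac{k\rho+2}{\rho S^{3}}$, so that the sign of $g'$ is governed entirely by $k\rho+2$. When $\beta\ge\tfrac12$ (i.e.\ $k\ge0$) we have $g'<0$ for all $\rho>0$, so $g$ decreases monotonically from $g(0^{+})=+\infty$ to $g(\infty)=0^{+}$ and stays positive; hence $\mu_0'>0$ throughout, $\mu_0$ rises monotonically from $0$ to its limit $1$, and therefore $\mu_0<1$ for every finite $r$. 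When $\beta<\tfrac12$ (i.e.\ $k<0$), $g$ has a single minimum at $\rho^{\star}=-2/k$, with $g(0^{+})=+\infty>0$ and $g(\infty)=0^{-}<0$; consequently $g$ has exactly one zero $\rho_1\in(0,\rho^{\star})$, so $\mu_0$ increases, attains a unique maximum at $r_1$ (necessarily with $\mu_0(r_1)>1$, since the curve then decays to $1$ from above), and decreases thereafter---exactly the claimed shape.

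For the final assertion I would locate the exact inflection from $\mu_0''=\tfrac12 g'=0$, which occurs precisely at $\rho^{\star}=-2/k$, i.e.\ $r_{\mathrm{infl}}=D/\sqrt{\pi(1-2\beta)}$. Because the maximizer satisfies $\rho_1<\rho^{\star}$, the peak always precedes the true inflection, giving $r_1<r_{\mathrm{infl}}$. The $r_2$ in the statement arises from replacing the logarithmic factor of $f$ by a low-order rational approximation before imposing vanishing curvature; carrying this out turns the condition into the quadratic $9k\rho^{2}+20\rho+k=0$, whose relevant positive root is $\rho_2=\frac{10+\sqrt{100-9k^{2}}}{9|k|}$, reproducing the displayed $r_2$ after substituting $\rho=\pi r^{2}/D^{2}$ and $k=4\beta-2$. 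A short check shows $\rho_2\ge\rho^{\star}$ for all $|k|<2$ (all $\beta\in(0,\tfrac12)$), so $r_1<r_{\mathrm{infl}}\le r_2$; moreover the ratio $\rho_2/\rho^{\star}=\frac{10+\sqrt{100-9k^{2}}}{18}$ stays in the narrow band $[1,\tfrac{10}{9}]$, which makes precise the sense in which ``$r_1$ is smaller than and close to $r_2$.''

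The main obstacle is twofold. The delicate analytic point is pinning down the sign of the tail $\mu_0-1$ and of $g(\infty)$: this requires expanding $S$ and the logarithm to the correct order so that the leading $2/\rho$ terms cancel and only the sign-carrying $k$-dependent term survives---a careless truncation loses exactly the information that separates $\beta\gtrless\tfrac12$. The modelling point is the justification of $r_2$: since it only approximates $r_{\mathrm{infl}}$, I would need to make the rational-approximation step explicit and bound its error to rigorously certify both $r_1<r_2$ and the qualitative closeness. Everything downstream of the two clean identities $f'=-2/(\rho S)$ and $g'=-(k\rho+2)/(\rho S^{3})$ is then elementary sign-chasing that parallels the ULA argument behind Proposition~1.
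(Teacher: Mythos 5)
Your argument is correct in substance but takes a genuinely different route from the paper's. The paper works directly in the physical variable $r$: it computes the first, second, and third partial derivatives of $\mu_0(r,\beta)$ (Eqs.~\eqref{prove_upa_dao1}--\eqref{prove_upa_dao3}), isolates the numerator $f_2(r,\beta)$ of the third derivative, counts its positive roots through its derivative factor $f_3(r,\beta)$ with a three-way case analysis on $\beta$, and then sign-chases down three derivative levels using the limits at $r\to 0$ and $r\to\infty$; its $r_2$ (denoted $r_8$ in Appendix D) is \emph{defined} as the positive root of $f_3(r,\beta)=0$, which in your normalized variables is exactly your quadratic $9k\rho^2+20\rho+k=0$ --- so your formula and choice of root coincide with the paper's even though your reconstruction of its provenance (a rational approximation of the logarithm before imposing vanishing curvature) is not how the paper obtains it. Your substitution $\rho=\pi r^2/D^2$, $k=4\beta-2$, and the identities $f'(\rho)=-2/(\rho S)$ and $g'(\rho)=-(k\rho+2)/(\rho S^3)$ --- both of which I verified, along with the tail constant in $\mu_0-1\sim(1-2\beta)D^2/(2\pi r^2)$ --- compress the analysis to two derivatives and a single sign function, make the dividing role of $\beta=\tfrac12$ directly visible in the asymptotics, and, via the chain $\rho_1<\rho^\star\le\rho_2$ with $\rho_2/\rho^\star=\bigl(10+\sqrt{100-9k^2}\bigr)/18$ lying between $1$ and $\tfrac{10}{9}$, actually \emph{prove} the ordering $r_1<r_2$ with a quantitative bound, whereas the paper only asserts ``$r_7$ is slightly smaller than $r_8$'' after a numerical observation. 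Two caveats you should repair. First, your $\rho^\star=-2/k$ is the inflection of $\mu_0$ with respect to $\rho$, not $r$: since
\begin{equation*}
\frac{\partial^2\mu}{\partial r^2}=\frac{\partial^2\mu}{\partial\rho^2}\left(\frac{2\pi r}{D^2}\right)^2+\frac{\partial\mu}{\partial\rho}\,\frac{2\pi}{D^2},
\end{equation*}
the zeros of $\partial^2\mu/\partial r^2$ (the quantity the proposition's closeness clause refers to, and which the paper shows has \emph{two} positive roots when $\beta<\tfrac12$) do not coincide with $\rho^\star$; your conclusion $r_1<r_2$ survives because it uses only $\rho$-space quantities and the monotone map $\rho\mapsto r$, but $r_{\mathrm{infl}}=D/\sqrt{\pi(1-2\beta)}$ must not be equated with the paper's $r$-inflection point. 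Second, the corner $\beta=1$ (i.e., $\phi=\varphi=0$), which is included in the claim ``$\mu<1$ when $\beta\ge\tfrac12$,'' degenerates in your parametrization ($\ln\frac{1-\beta}{\beta}$ diverges and $AB=(4-k^2)\rho^2$ vanishes at $k=2$) and needs the separate branch $\mu(r,0,0)=\frac{\pi r^2}{D^2}\ln\left(\frac{D^2}{\pi r^2}+1\right)$, which the paper dispatches in one line using $\ln(1+x)<x$ for $x>0$.
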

	\begin{proof}
		See Appendix \ref{appendix_prop2}.
	\end{proof}
	\begin{figure}[t]
		\begin{centering}
			\includegraphics[width=.45\textwidth]{./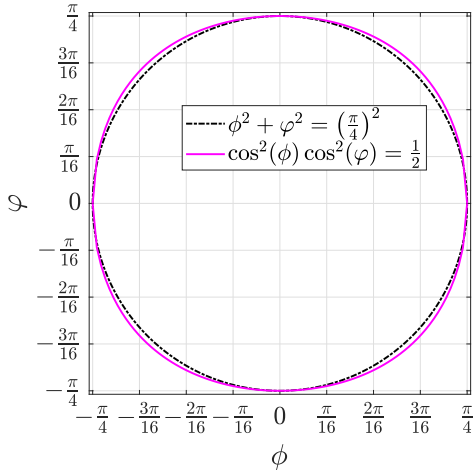}
			\caption{The dividing curve of a UCPA.} \label{upa_line}
		\end{centering}
	\end{figure}
	Fig.~\ref{upa_line} illustrates the dividing curve of a UCPA. When the angles $(\phi,\varphi)$ lie within the curve, the normalized received power $\mu(r,\phi,\varphi)$ is always smaller than one. It can be found that the dividing curve resembles a circle with a radius of $\frac{\pi}{4}$. In addition, for instance, when $D=1$ m, and $\cos^2 (\phi) \cos^2 (\varphi)=0.1$, we have $r_2=0.6442$ m. Through simulation, we find $r_1\approx 0.5257$ m. It is found that  $r_1$ is smaller than and close to $r_2$. 
	
	\begin{remark}
		It seems impossible to derive a closed-form expression for uniform square planar array (USPA) and uniform rectangular planar array (URPA) structures. Nevertheless, based upon extensive simulations, we find that $\cos^2 (\phi) \cos^2 (\varphi) =\frac{1}{2}$ is also the dividing curve of $\mu(r,\phi,\varphi)$ for the USPA structure. As for the URPA structure, let us focus on the region $\{(\phi,\varphi) | 0\leq \phi <\frac{\pi}{2}, 0\leq \varphi <\frac{\pi}{2} \}$, we have
		
		1) $\phi_0=\frac{\pi}{6}$ ($30^\circ$), and $\varphi_0 = \arccos{\sqrt{\frac{1}{2 \cos^2 (\pi/6)}}}$ ($\approx 35.26^\circ$) is the dividing curve;
		
		2) When $N_y d_y < N_z d_z$, $\cos^2 (\phi) \cos^2 (\varphi) =\frac{1}{2}$ is the dividing curve with the following condition: $0\leq \phi < \phi_0$ or $\varphi_0 < \varphi \leq \frac{\pi}{4}$;
		
		3) When $N_y d_y > N_z d_z$, $\cos^2 (\phi) \cos^2 (\varphi) =\frac{1}{2}$ is the dividing curve with the following condition: $\phi_0 < \phi \leq \frac{\pi}{4}$ or $0\leq \varphi <\varphi_0 $.
	\end{remark}
	
	\begin{figure}[t]
		\begin{centering}
			\includegraphics[width=.45\textwidth]{./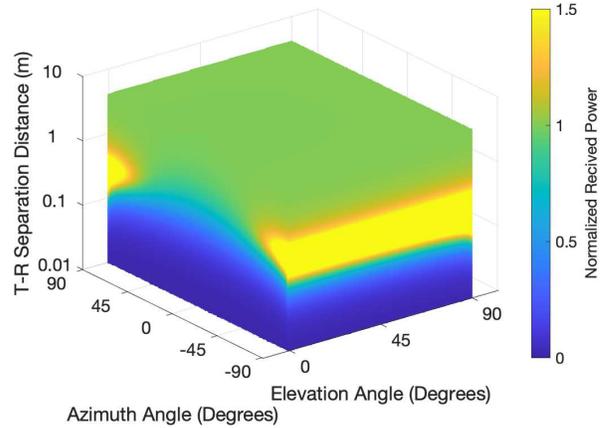}
			\caption{  {Point-to-UPA scenario: The normalized received power of a UCPA where $N_y = N_z = 127$, and $\lambda=0.01$ m.}}\label{upa_power}
		\end{centering}
	\end{figure}
	\begin{figure}[t]
		\begin{centering}
			\includegraphics[width=.45\textwidth]{./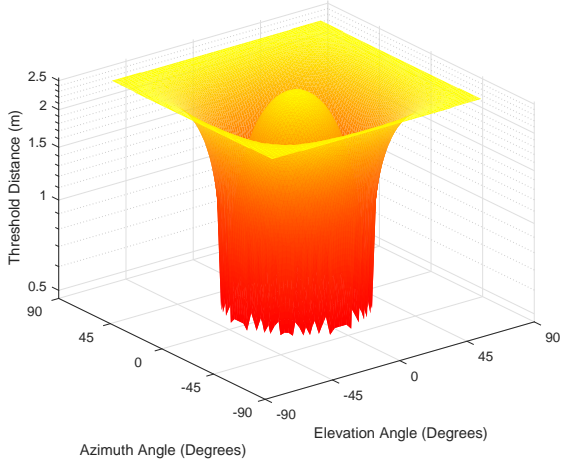}
			\caption{  {Point-to-UPA scenario: The equi-power surface of a UCPA where $N_y = N_z = 127$, and $\lambda=0.01$ m.}}\label{upa_surface}
		\end{centering}
	\end{figure}
	Considering the UCPA structure, when $\phi=\varphi=0$ and $\delta_\Delta=0.99$, the corresponding equi-power distance is $r(0,0)\approx 3.96 D$. Fig.~\ref{upa_power} illustrates the normalized received power at different angles, and Fig.~\ref{upa_surface} describes the equi-power surface $r(\phi,\varphi)$. Both  Fig.~\ref{upa_power} and Fig.~\ref{upa_surface} are obtained under the scenario of $N_y = N_z = 127$, $\lambda=0.01$ m, and $d_y=d_z=\frac{\lambda}{2}$. We set $\delta_\Delta$ to 0.99 when $\mu(r, \theta,\varphi)$ is always smaller than one, and  1.01 otherwise. The equi-power surface in Fig.~\ref{upa_surface} looks like an egg fallen into a hat, in which there is an inflection surface as explained in Proposition \ref{pro_upa1}.
	
	\begin{figure}[t]
		\begin{centering}
			\includegraphics[width=.35\textwidth]{./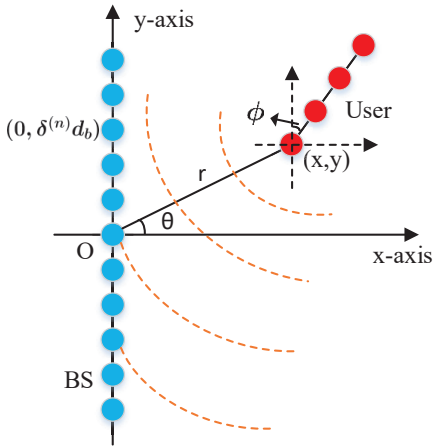}
			\caption{ The system layout of a ULA to a ULA under the SWM.}\label{fig_ula2ula}
		\end{centering}
	\end{figure}
	\section{Equi-Rank Surface of a ULA to a ULA} \label{sec_ula2ula}
	In this section, we focus on the equi-rank surface of a ULA to a ULA as shown in Fig.~\ref{fig_ula2ula}, where an $N$-antenna BS communicates with an $M$-antenna user. Suppose the midpoint of the ULA of the BS lies at $(0,0)$ in the XY-plane. The coordinate of the $n$-th antenna is $(0, \delta^{(n)} d_b)$, where $\delta^{(n)} = n-\frac{N-1}{2}$ with $n=0,1,\ldots,N-1$, and $d_b$ denotes the antenna spacing of the ULA at the BS. The coordinate of the first antenna of the user is $(x,y)$ with its polar coordinate $(r,\theta)$. The ULA of the user has an angle $\phi$ along with Y-axis. Thus, the coordinate of the $m$-th antenna of the user is $(x+ \delta^{(m)} d_u  \sin(\phi) ,y+ \delta^{(m)}d_u \cos(\phi))$, where $\delta^{(m)} = m$ with $m=0,1,\ldots,M-1$, and  $d_u$ is the antenna spacing of the ULA at the user. The channel  between the BS and the user under the SWM is denoted as $\mathbf{H}_s \in \mathbb{C}^{N\times M}$, where the $nm$-th entry of $\mathbf{H}_s$, i.e., $h_{nm}$, denotes the channel between the $n$-th antenna at the BS and the $m$-th antenna at the user, satisfying
	\begin{equation} \label{ula2ula_hmn}
		h_{nm} = \frac{\lambda}{4\pi r_{nm}} e^{ -j \frac{2\pi}{\lambda} r_{nm} }.
	\end{equation}
	Here, $r_{nm}$ denotes the distance between the $n$-th antenna at the BS and the  $m$-th antenna at the user, and it satisfies
	\begin{equation} \label{upa_rnm}
		\begin{aligned}
			r_{nm}^2 =&\; \left(r \cos(\theta) + \delta^{(m)}d_u \sin (\phi)\right)^2 \\
			& + \left(r \sin(\theta) + \delta^{(m)} d_u \cos(\phi)-\delta^{(n)} d_b\right)^2 .
		\end{aligned}
	\end{equation}
	We are interested in the matrix
	\begin{equation} \label{def_W}
		\mathbf{W}=
		\begin{cases}
			\mathbf{H}_s^H \mathbf{H}_s, & M<N \\
			\mathbf{H}_s \mathbf{H}_s^H, & M \geq N ,
		\end{cases}
	\end{equation}
	since the square of the singular values of $\mathbf{H}_s$ is equal to the eigenvalues of $\mathbf{W}$ \cite{4799060}. In the following, we will study the rank of $\mathbf{W}$.
	
	In order to investigate the rank of the channel, we introduce the effective rank of a non-all-zero matrix $\mathbf{A}\in \mathbb{C}^{M\times N}$  defined as \cite{7098875}
	\begin{equation}
		\operatorname{erank}(\mathbf{A}) = \exp \left( -\sum_i \frac{\sigma_i}{\|\mathbf{A}\|_*} \ln \left(\frac{\sigma_i}{\|\mathbf{A}\|_*}\right) \right),
	\end{equation}
	where $\sigma_i$ is the $i$-th largest singular value of $\mathbf{A}$, and $\|\mathbf{A}\|_*=\sum_i \sigma_i$ is the nuclear norm of $\mathbf{A}$, so that $\frac{\sigma_i}{\|\mathbf{A}\|_*}$ can be regarded as the normalized singular value of $\mathbf{A}$ \cite[Eq. (8)]{sun2018propagation}. As proven in \cite{7098875}, the effective rank of $\mathbf{A}$ is within $[1, \min(M,N)]$, and the effective rank is the lower bound of the rank of the matrix $\mathbf{A}$. With a higher value of the effective rank, the orthogonality of the columns/rows of the matrix is higher.
	
	\begin{figure}[htbp]
		\centering
		\vspace{-3mm}
		\subfloat[$\phi=0^\circ$ ]{\includegraphics[width=0.4\textwidth] {./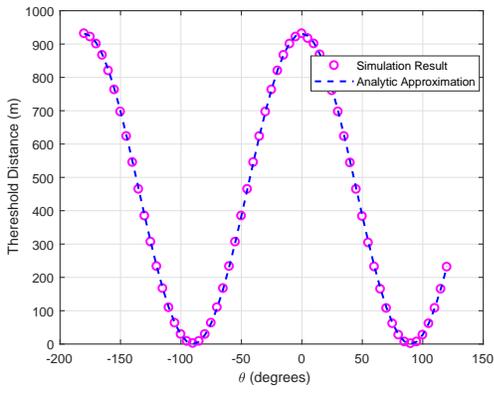}} \\
		\vspace{0mm}
		\subfloat[$\phi=30^\circ$] {\includegraphics[width=0.4\textwidth] {./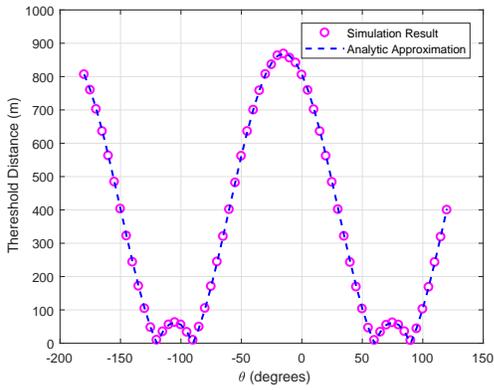}} \\
		\vspace{0mm}
		\subfloat[$\phi=50^\circ$] {\includegraphics[width=0.4\textwidth] {./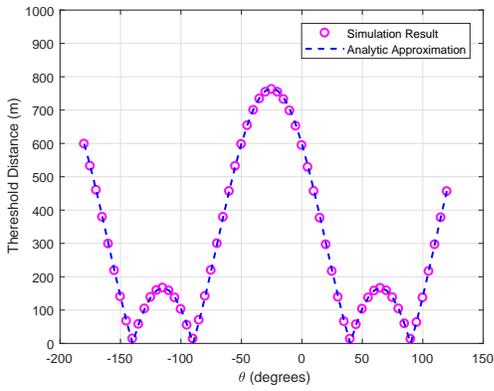}} \\
		\vspace{0mm}
		\subfloat[$\phi=60^\circ$] {\includegraphics[width=0.4\textwidth] {./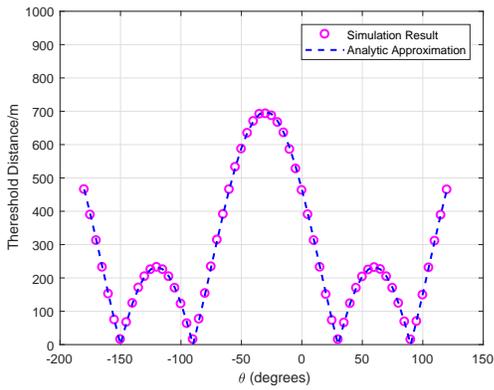}} \\
		\vspace{-1mm}
		\caption{ {ULA-to-ULA scenario: The threshold distance versus $\theta$ with fixed $\phi$, where $N=511$, $M=128$, and $\lambda=0.01$ m.}}
		\label{fig_ula2ula_line1}
	\end{figure}
	
	\begin{figure}[htbp]
		\centering
		\vspace{-3mm}
		\subfloat[$\theta=0^\circ$] {\includegraphics[width=0.4\textwidth] {./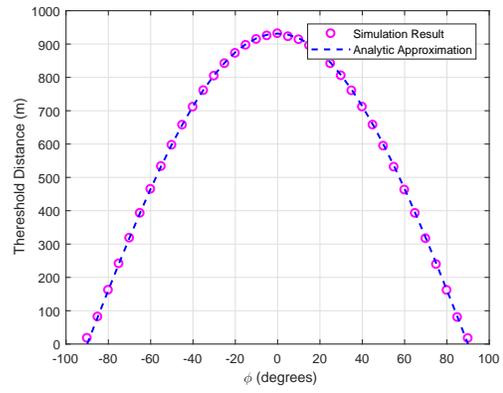}} \\
		\vspace{-0mm}
		\subfloat[$\theta=30^\circ$] {\includegraphics[width=0.4\textwidth] {./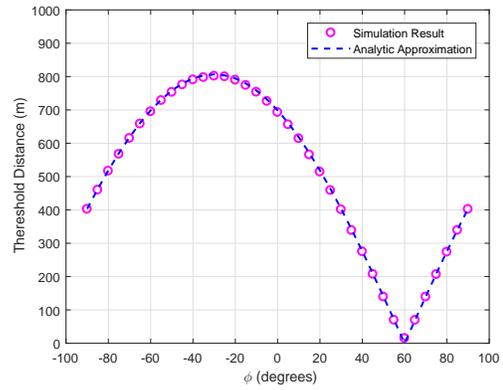}} \\
		\vspace{-0mm}
		\subfloat[$\theta=50^\circ$] {\includegraphics[width=0.4\textwidth] {./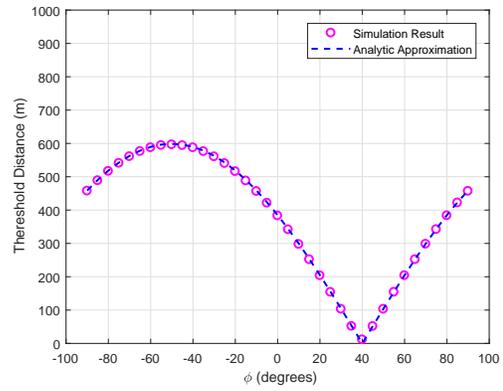}} \\
		\vspace{-0mm}
		\subfloat[$\theta=60^\circ$] {\includegraphics[width=0.4\textwidth] {./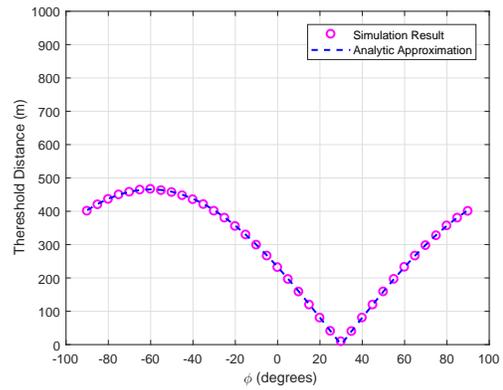}} \\
		\vspace{-1mm}
		\caption{ {ULA-to-ULA scenario:The threshold distance versus $\phi$ with fixed $\theta$, where $N=511$, $M=128$, and $\lambda=0.01$ m.}}
		\label{fig_ula2ula_line2}
	\end{figure}

	When the channel between the BS and the user is approximated by the PWM, its effective rank is one. However, the PWM is only valid when the distance is large. The solutions $r(\theta,\phi)$ of $\operatorname{erank}(\mathbf{W}) (r,\theta,\phi)= \delta_\Delta$ denote approximately the same effective rank against the PWM and constitute a surface. Hence, we name the threshold  $r(\theta,\phi)$ as \emph{equi-rank surface}. Here, $\delta_\Delta$ is a constant close to one. Since it is challenging to obtain the closed-form expression of the singular values of the matrix $\mathbf{W}$, we investigate the effective rank of the matrix $\mathbf{W}$ through simulations. In this section, the threshold $\delta_\Delta$ is set to 1.05.
	
	The investigation is divided into two parts. First, with fixed antenna numbers $N$ and $M$, the angles are sampled to find the relationships between the threshold distance  and the angles. Then, with fixed angles, the antenna numbers $N$ and $M$ are sampled to explore the relationships between the threshold distance and the antenna numbers. There are many interesting findings.
	
	Let us begin with the first part, as shown in Fig.~\ref{fig_ula2ula_line1} and Fig.~\ref{fig_ula2ula_line2}, where $N=511$, $M=128$, $\lambda=0.01$ m, and $d_b=d_u=\frac{\lambda} {2}$. Fig.~\ref{fig_ula2ula_line1} illustrates the threshold distance versus $\theta$ with fixed $\phi$, where $\theta$ is sampled every 5 degrees, $\phi=0^\circ$ in Fig.~\ref{fig_ula2ula_line1}(a), $\phi=30^\circ$  in Fig.~\ref{fig_ula2ula_line1}(b), $\phi=50^\circ$  in Fig.~\ref{fig_ula2ula_line1}(c), and $\phi=60^\circ$ in Fig.~\ref{fig_ula2ula_line1}(d). The markers and dotted curves denote the simulation results and analytic approximation, respectively. Note that the analytic expression will be expounded later. Denote the threshold distance when $\theta=\phi=0^\circ$ as $r_1$. There are some interesting findings: 1) When $\phi=0^\circ$, the simulation results in Fig.~\ref{fig_ula2ula_line1}(a) can be well approximated by $r_1 \cos^2 (\theta)$. 2) When $0<\phi< 90^\circ$, the simulation results have a peak and two sub-peaks when $-180^\circ\leq \theta\leq 120^\circ$. As $\phi$ increases, the value of the peak will decrease, and the values of the sub-peaks will increase. However, the summation of the peak and the sub-peak is always equal to $r_1$. 3) The peak arises at $\theta=-\frac{\phi}{2}$. The sub-peaks arise at $\theta=-\frac{\phi}{2} \pm 90^\circ$ both with a width of $\phi$ degrees.
	
	Fig.~\ref{fig_ula2ula_line2} displays the threshold distance versus $\phi$ with fixed $\theta$, where $\phi$ is sampled every 5 degrees, $\theta=0^\circ$ in Fig.~\ref{fig_ula2ula_line2}(a), $\theta=30^\circ$ in Fig.~\ref{fig_ula2ula_line2}(b), $\theta=50^\circ$ in Fig.~\ref{fig_ula2ula_line2}(c), and $\theta=60^\circ$ in Fig.~\ref{fig_ula2ula_line2}(d). A few remarks can be drawn: 1) When $\theta=0^\circ$, the simulation results in Fig.~\ref{fig_ula2ula_line2}(a) can be well approximated by $r_1 |\cos (\phi)|$, except for the points around $\phi=\pm 90^\circ$. 2) As $\theta$ increases, the maximum value decreases and can be well approximated by $r_1 \cos(\theta)$. 3) When the simulation results in Fig.~\ref{fig_ula2ula_line2} are normalized by its own maximum values, it can be well approximated by $|\cos(\phi+\theta)|$.
	
	Based on the above findings, we have the following proposition.
	\begin{proposition} \label{prop_ula2ula}
		If the threshold distance when $\theta_0=\phi_0=0^\circ$ is $r_1$, with fixed antenna structure and fixed wavelength, the equi-rank surface for any angles $\theta$ and $\phi$ can be approximated by
		\begin{equation} \label{ula2ula_eq1}
			r(\theta,\phi) = r_1 \left| \cos^2\left(\theta+\frac{1}{2}\phi\right) - \sin^2\left(\frac{1}{2}\phi\right) \right|.
		\end{equation}
	\end{proposition}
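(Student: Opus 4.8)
The plan is to reduce the effective rank to a single geometric scalar by Taylor-expanding the link distances $r_{nm}$ in \eqref{upa_rnm} to second order in the element indices. Writing the BS element positions as $\delta^{(n)}d_b\mathbf{v}$ with $\mathbf{v}=(0,1)$, the user element positions as $\mathbf{c}+\delta^{(m)}d_u\mathbf{u}$ with $\mathbf{u}=(\sin\phi,\cos\phi)$ and user-center $\mathbf{c}=r(\cos\theta,\sin\theta)$, and applying $\sqrt{1+x}\approx 1+\tfrac12 x-\tfrac18 x^2$, I would collect the expansion of $r_{nm}$ into four groups: the constant $r$; the plane-wave terms linear in $\delta^{(n)}$ or $\delta^{(m)}$ separately; the pure quadratic terms in $(\delta^{(n)})^2$ and $(\delta^{(m)})^2$; and the bilinear cross term in $\delta^{(n)}\delta^{(m)}$. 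A short computation shows the second-order part is $\tfrac{1}{2r}\big(\delta^{(m)}d_u\cos(\theta+\phi)-\delta^{(n)}d_b\cos\theta\big)^2$, i.e.\ the squared transverse displacement of the two arrays projected perpendicular to the line of sight, whose cross term contributes a phase $\kappa\,\delta^{(n)}\delta^{(m)}$ to $h_{nm}$ with
\begin{equation}
|\kappa|=\frac{2\pi d_b d_u}{\lambda r}\,\big|\cos(\theta)\cos(\theta+\phi)\big|.
\end{equation}
Here the two cosine factors are precisely the BS and user apertures projected onto the direction perpendicular to the line of sight.

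The key reduction is that $\operatorname{erank}(\mathbf{W})$ depends on $\mathbf{H}_s$ only through its singular values, which are invariant under left/right multiplication of $\mathbf{H}_s$ by diagonal unitary matrices. Every phase contribution that depends on $n$ alone or on $m$ alone -- the entire plane-wave part together with both pure quadratic terms -- can thus be absorbed into such diagonal unitary factors and leaves the singular values, and hence the effective rank, unchanged. Neglecting the comparatively slow amplitude variation $\tfrac{\lambda}{4\pi r_{nm}}\approx\tfrac{\lambda}{4\pi r}$ (a weak, non-phase effect responsible for the ``approximately'' in the statement), the only surviving term that shapes the singular-value spread is the bilinear chirp $e^{j\kappa\,\delta^{(n)}\delta^{(m)}}$, whose sign is immaterial. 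Consequently, for a fixed antenna structure and wavelength, $\operatorname{erank}(\mathbf{W})$ is a function of the single scalar $|\kappa|$.

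I would then argue that this function increases monotonically in $|\kappa|$: at $\kappa=0$ the core matrix is rank one, so $\operatorname{erank}=1$, while increasing $|\kappa|$ raises the phase curvature accumulated across the arrays and excites additional significant singular values. The equi-rank condition $\operatorname{erank}(\mathbf{W})=\delta_\Delta$ therefore pins $|\kappa|$ to a critical value $\kappa^\star$ that depends only on $N$, $M$, and $\delta_\Delta$. Solving $|\kappa|=\kappa^\star$ for $r$ gives $r(\theta,\phi)\propto|\cos(\theta)\cos(\theta+\phi)|$; normalizing by the broadside case $\theta=\phi=0$, where $r(0,0)=r_1$, eliminates the proportionality constant and yields $r(\theta,\phi)=r_1\,|\cos(\theta)\cos(\theta+\phi)|$. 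The stated form then follows from the identity $\cos(\theta)\cos(\theta+\phi)=\cos^2(\theta+\tfrac12\phi)-\sin^2(\tfrac12\phi)$, and one verifies that it reproduces every empirical regularity reported above: the law $r_1\cos^2(\theta)$ at $\phi=0$, the law $r_1|\cos(\phi)|$ at $\theta=0$, the peak at $\theta=-\tfrac{\phi}{2}$, and the peak-plus-subpeak sum $r_1\cos^2(\tfrac{\phi}{2})+r_1\sin^2(\tfrac{\phi}{2})=r_1$.

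The main obstacle I anticipate lies in the monotonicity step: rigorously establishing that $\operatorname{erank}$ is a strictly increasing function of $|\kappa|$, so that $\kappa^\star$ is well defined and unique. The singular-value distribution of the chirp matrix $e^{j\kappa\,\delta^{(n)}\delta^{(m)}}$ is governed by a prolate-spheroidal-type concentration effect, for which monotonicity of the effective rank is intuitively clear but not wholly elementary. The secondary delicate point is quantitative control of the dropped higher-order Taylor terms and of the amplitude variation, which is what limits \eqref{ula2ula_eq1} to an approximation rather than an identity.
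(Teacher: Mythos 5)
Your computations are correct, but note that the paper contains no proof of this proposition at all: it is stated as an empirical fit to the simulation regularities reported around Fig.~\ref{fig_ula2ula_line1} and Fig.~\ref{fig_ula2ula_line2} (the $r_1\cos^2(\theta)$ law at $\phi=0^\circ$, the $r_1|\cos(\phi)|$ law at $\theta=0^\circ$, the peak at $\theta=-\frac{\phi}{2}$, and the peak/sub-peak sum equal to $r_1$), so your analytical derivation is a genuinely different route. I verified the key steps: completing the square in the second-order Taylor terms of $r_{nm}$ does give $\frac{1}{2r}\bigl(\delta^{(m)}d_u\cos(\theta+\phi)-\delta^{(n)}d_b\cos(\theta)\bigr)^2$, using $\cos(\phi)-\sin(\theta+\phi)\sin(\theta)=\cos(\theta)\cos(\theta+\phi)$; the diagonal-unitary invariance argument correctly strips every $n$-only and $m$-only phase, leaving the bilinear chirp with $|\kappa|=\frac{2\pi d_b d_u}{\lambda r}|\cos(\theta)\cos(\theta+\phi)|$; and the identity $\cos(\theta)\cos(\theta+\phi)=\frac{1}{2}\bigl[\cos(2\theta+\phi)+\cos(\phi)\bigr]=\cos^2\bigl(\theta+\frac{1}{2}\phi\bigr)-\sin^2\bigl(\frac{1}{2}\phi\bigr)$ converts your $r_1|\cos(\theta)\cos(\theta+\phi)|$ into exactly Eq.~\eqref{ula2ula_eq1}. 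Your approach buys three things the paper's curve-fitting does not: it explains the angular law rather than postulating it; since the effective rank of the chirp kernel is governed by the aggregate phase-curvature product $|\kappa|NM$, it simultaneously predicts the scaling $r_1\propto \frac{Nd_b\,Md_u}{\lambda}$ that the paper states separately (and also only empirically) as Proposition~\ref{prop_ula2ula2}; and it explains the one failure mode the paper observes, namely $\phi=\pm 90^\circ-\theta$, where $\cos(\theta+\phi)=0$ kills the bilinear term so the discarded cubic Taylor terms dominate and the true threshold is small but nonzero, whereas the fitted formula returns exactly zero. What the paper's route buys is direct validation against the actual $\operatorname{erank}(\mathbf{W})$ without relying on the two steps you candidly flag as heuristic --- strict monotonicity of the effective rank in $|\kappa|$ and quantitative control of the neglected higher-order and amplitude terms; since the proposition is explicitly an approximation validated numerically, these gaps are acceptable at the paper's own standard of rigor, and your argument is, if anything, the stronger of the two.
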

	
	Proposition \ref{prop_ula2ula} reveals the relationship between the equi-rank surface and the angles for a fixed antenna structure. The analytic approximation in Proposition \ref{prop_ula2ula} fits the simulation results well except for the scenario of $\phi=\pm 90^\circ-\theta$, as shown in Fig.~\ref{fig_ula2ula_line1} and Fig.~\ref{fig_ula2ula_line2}. When $\phi=\pm 90^\circ-\theta$, the ULA at the user forms a line to the origin. Then,  the actual threshold distance is slightly greater than zero. However, the analytic approximation in Proposition \ref{prop_ula2ula} is exactly zero. Thus, it causes errors in this case. Overall, the analytic approximation in Proposition \ref{prop_ula2ula} fits the simulation results  well.
	
	\begin{figure}[t]
		\begin{centering}
			\includegraphics[width=.45\textwidth]{./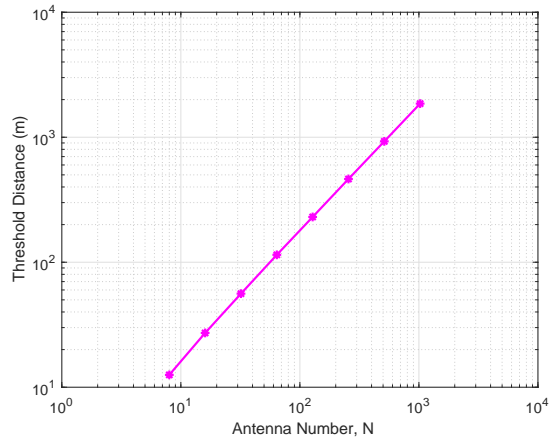}
			\caption{ {ULA-to-ULA scenario: The threshold distance versus the antenna number $N$, when $\theta=\phi=0^\circ$, $M=128$, and $\lambda=0.01$ m.}}\label{fig_ula2ula_line3}
		\end{centering}
	\end{figure}
	
	Now, let us focus on the second part. Fig.~\ref{fig_ula2ula_line3} shows the threshold distance versus the antenna number at the BS, where $\lambda=0.01$ m, and $d_b=d_u=\frac{\lambda} {2}$, $\theta=\phi=0^\circ$, $M=128$, and $N$ is taken from the set $\{ 8, 16, 32, 64, 128, 256, 512, 1024 \}$. It is worth noting that the threshold distance has a linear relationship with $N$, whether $N<M$ or $N\geq M$. Actually, the threshold distance has a linear relationship with the array aperture, not only the number of antennas, when the number of antennas of both the BS and user is larger than 6.  This linear relationship is likely due to the fact that when the number of antennas increases by a certain value, the curve of effective rank vs. distance translates to the right by a corresponding fixed amount. In addition, inspired by the classical Rayleigh distance $\frac{2D^2}{\lambda}$ where the Rayleigh distance and the wavelength  have an inverse relationship, we have the following proposition.
	\begin{proposition} \label{prop_ula2ula2}
		If the threshold distance for $\lambda_0$, $N_0$, $M_0$, $d_{b0}$, $d_{u0}$, and $\theta_0=\phi_0=0^\circ$ is $r_0$, then the threshold distance for any $\lambda$, $N$, $M$, $d_b$ and $d_u$, when $\theta_0=\phi_0=0^\circ$ can be approximated by
		\begin{equation} \label{ula2ula_eq2}
			r_1 =  \frac{N d_b M d_u}{N_0 d_{b0} M_0 d_{u0}} \frac{\lambda_0}{\lambda} r_0.
		\end{equation}
	\end{proposition}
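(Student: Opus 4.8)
The plan is to show that, at broadside ($\theta_0=\phi_0=0^\circ$), $\operatorname{erank}(\mathbf{W})$ depends on the system parameters only through the single dimensionless quantity $\kappa \triangleq \frac{N d_b M d_u}{\lambda r}$, so that the level set $\operatorname{erank}(\mathbf{W})=\delta_\Delta$ pins $\kappa$ to a fixed constant $\kappa^\star$, from which the scaling law \eqref{ula2ula_eq2} follows immediately. First I would substitute $\theta=\phi=0$ into \eqref{upa_rnm} to obtain $r_{nm}^2 = r^2 + (m d_u - \delta^{(n)} d_b)^2$, and Taylor-expand the distance to second order,
\begin{equation}
	r_{nm} \approx r + \frac{m^2 d_u^2 + (\delta^{(n)})^2 d_b^2}{2r} - \frac{m d_u \, \delta^{(n)} d_b}{r}.
\end{equation}
The term $m^2 d_u^2/(2r)$ depends only on $m$ and $(\delta^{(n)})^2 d_b^2/(2r)$ only on $n$; in the phase $e^{-j 2\pi r_{nm}/\lambda}$ of \eqref{ula2ula_hmn} they factor out as a right- and a left-multiplying diagonal unitary matrix, which leave the singular values of $\mathbf{H}_s$, and hence $\operatorname{erank}(\mathbf{W})$, invariant. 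The global phase $e^{-j2\pi r/\lambda}$ is likewise immaterial. All the rank structure is therefore carried by the bilinear cross term, giving a phase matrix with entries $\exp\!\big(j \tfrac{2\pi d_u d_b}{\lambda r}\, m\,\delta^{(n)}\big)$; note that under the PWM only the linear (separable) terms survive, which is exactly why the PWM yields $\operatorname{erank}=1$.

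Second I would show that this cross-term matrix depends only on $\kappa$. Normalizing $s = \delta^{(n)}/\tfrac{N-1}{2}\in[-1,1]$ and $t=m/(M-1)\in[0,1]$, its entries become $\exp(j\pi\kappa\, s\,t)$ with $\kappa \approx \frac{N d_b M d_u}{\lambda r}$ the space-bandwidth product. For $N,M$ large enough to oversample this kernel — consistent with the empirically observed ``more than $6$ antennas'' regime — the normalized singular-value profile, and therefore the effective rank, converges to a universal function $\operatorname{erank}\approx F(\kappa)$ of $\kappa$ alone, by the standard degrees-of-freedom/prolate-spheroidal argument for bandlimited kernels. The slowly varying amplitude factor $\tfrac{\lambda}{4\pi r_{nm}}\approx\tfrac{\lambda}{4\pi r}$ is nearly constant across the array and does not materially perturb this profile, which is precisely where the \emph{approximate} nature of the claim enters.

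Third, since $F$ is monotonic near the operating point, $\operatorname{erank}(\mathbf{W})=\delta_\Delta$ is equivalent to $\kappa=\kappa^\star\triangleq F^{-1}(\delta_\Delta)$, i.e.\ $\frac{N d_b M d_u}{\lambda r_1}=\kappa^\star$. Evaluating this at the reference configuration gives $\kappa^\star = \frac{N_0 d_{b0} M_0 d_{u0}}{\lambda_0 r_0}$, and eliminating $\kappa^\star$ between the two identities yields exactly \eqref{ula2ula_eq2}. This reduction also recovers the empirical observations of the preceding paragraphs: holding everything but $N$ fixed makes $r_1$ linear in $N$, matching Fig.~\ref{fig_ula2ula_line3}, while the $1/\lambda$ dependence mirrors the Rayleigh-distance scaling that motivated the proposition.

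The main obstacle is the second step — establishing that $\operatorname{erank}$ collapses to a function of $\kappa$ alone. The separability of the $n$-only and $m$-only phases is exact, but the single-parameter reduction is asymptotic: it requires the quadratic Taylor term to dominate the neglected quartic contribution $\tfrac{(m d_u-\delta^{(n)}d_b)^4}{8r^3}$, valid in the near-but-not-too-near regime that the threshold distance occupies, and the arrays to be sampled densely enough that discreteness is invisible. Consequently I would state \eqref{ula2ula_eq2} as an approximation justified by this space-bandwidth reduction and corroborated by the simulations, rather than as an exact identity.
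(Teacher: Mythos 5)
Your proposal is correct in substance, but it is a genuinely different route from the paper's, because the paper offers no analytic proof of this proposition at all: Proposition~\ref{prop_ula2ula2} is put forward empirically, from the observed linear dependence of the threshold distance on the antenna number/aperture in Fig.~\ref{fig_ula2ula_line3} together with an analogy to the $1/\lambda$ scaling of the classical Rayleigh distance, and is then validated by spot-check simulations (e.g., the $r_1=116.25$~m vs.\ $87.60$~m example). What you do instead is derive the scaling law: at broadside, $r_{nm}^2=r^2+(md_u-\delta^{(n)}d_b)^2$, the quadratic $m$-only and $n$-only phase terms factor out as diagonal unitaries that leave the singular values of $\mathbf{H}_s$ untouched, and the residual cross-term kernel $\exp\!\left(j\pi\kappa\,st\right)$ depends on the configuration only through $\kappa=\frac{Nd_bMd_u}{\lambda r}$, so the level set $\operatorname{erank}=\delta_\Delta$ pins $\kappa$ and \eqref{ula2ula_eq2} follows. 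This buys an explanation of \emph{why} the threshold is linear in each aperture and inverse in wavelength (facts the paper can only observe), correctly identifies that the PWM retains only the separable linear phases (hence $\operatorname{erank}=1$), and delineates the validity regime (quadratic term dominating the quartic correction, arrays dense enough that discreteness washes out), which matches the paper's empirical caveat that the linear law holds once both arrays exceed roughly $6$ elements. What the paper's empirical route buys, and yours cannot supply, is the actual constant: your argument reduces everything to an unknown universal function $F(\kappa)$, so the reference value $r_0$ (e.g., $141.91$~m for $\delta_\Delta=1.05$) must still come from simulation, exactly as in the paper. Two steps of yours are asserted rather than proven --- the convergence of the normalized singular-value profile of the sampled kernel to a function of $\kappa$ alone, and the local monotonicity of $F$ near $\delta_\Delta$ --- but since the proposition itself is explicitly an approximation corroborated numerically, stating these as the residual assumptions, as you do, is appropriate.
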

	
	\begin{figure}[t]
		\begin{centering}
			\includegraphics[width=.45\textwidth]{./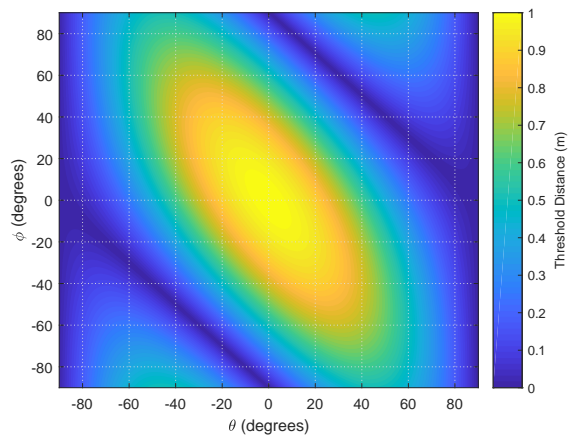}
			\caption{ The equi-rank surface of a ULA to a ULA.}\label{fig_ula2ula_surface}
		\end{centering}
	\end{figure}
	
	Proposition \ref{prop_ula2ula2} uncovers the relationships between the equi-rank surface and antenna structure when $\theta=\phi=0^\circ$. When $\lambda=0.01$ m, $d_{b0}=d_{u0}=\frac{\lambda}{2}$, and $N_0=M_0=100$, the threshold distance $r_0\approx 141.91$ m. Combining Proposition \ref{prop_ula2ula} and Proposition \ref{prop_ula2ula2}, the equi-rank surface for any antenna structure and any angle can be obtained. For instance, when $\lambda=0.005$ m, $d_{b}=d_{u}=\frac{\lambda}{2}$, $N=256$, $M=64$, $\theta=10^\circ$, and $\phi=30^\circ$, we have $r_1 = 116.25$ m according to Eq. \eqref{ula2ula_eq2}, the threshold distance $r = 87.70$ m according to Eq. \eqref{ula2ula_eq1}, while the result obtained by simulation is 87.60 m, showing that the analytic approximation fits the simulation results well. Fig.~\ref{fig_ula2ula_surface} illustrates the equi-rank surface of ULA to ULA when $r_1=1$ m in Eq. \eqref{ula2ula_eq1} for better understanding. 
	
	Actually, Proposition \ref{prop_ula2ula} and Proposition \ref{prop_ula2ula2} also work when $\delta_\Delta$ ($1<\delta_\Delta\leq 2$) is set to different values, with the difference of varied values of $r_0$ in Eq. \eqref{ula2ula_eq2}. Table \ref{eR_ula2ula} displays some values of $r_0$ in Eq. \eqref{ula2ula_eq2} versus $\delta_\Delta$, where $\lambda=0.01$ m, $d_{b0}=d_{u0}=\frac{\lambda}{2}$, and $N_0=M_0=100$. With smaller threshold $\delta_\Delta$, the distance $r_0$ is larger. The smaller the threshold $\delta_\Delta$ is, the closer the SWM is to the PWM, and the lower our error tolerance is.
	
	\begin{table}[t]
		\centering
		\caption{ {Values of $r_0$ in Eq. \eqref{ula2ula_eq2} versus $\delta_\Delta$, where $\lambda=0.01$ m, $d_{b0}=d_{u0}=\frac{\lambda}{2}$, and $N_0=M_0=100$.}}
		\begin{tabular}{cc}\hline
			\text{Threshold $\delta_\Delta$ } & $r_0$ (m) \\ \hline
			1.05 & 141.91  \\ \hline
			1.10 & 93.62 \\ \hline
			1.20 & 61.13 \\ \hline
			1.50 & 33.78 \\ \hline
			2.00 & 20.41 \\ \hline
		\end{tabular}
		\label{eR_ula2ula}
	\end{table}
	
	Besides, Table \ref{comp_ula2ula} compares different demarcations between applicable regions for SWM and PWM for the MIMO scenario, where $N=100$, $\lambda=0.01$ m, $d_u=d_b=\frac{\lambda}{2}$, and $\theta=\phi=0^\circ$. In Table \ref{comp_ula2ula}, {the classical Rayleigh distance is obtained according to $\frac{2(D_1+D_2)^2}{\lambda}$ \cite{cui2022near},} where $D_1$ and $D_2$ denote the array aperture at the BS and user, respectively; the distance in  \cite{4799060} is calculated by $\sqrt{\frac{(N^2-1)(M-1)^2 \pi^2}{6M(1-g)}} \frac{d_u d_b}{\lambda}$, where $g$ is set to 0.99. {The classical Rayleigh distance is larger than the other two demarcations since it is based on the phase difference.} In addition, the classical Rayleigh distance is independent of the angles, which is not practical. The authors in \cite{4799060} only consider the largest eigenvalue, while all eigenvalues are involved in our proposed equi-rank surface, and it turns out that the proposed equi-rank surface is larger than the distance in \cite{4799060}.
	\begin{table}[t]
		\centering
		\caption{ {Comparison among different demarcations between applicable regions for SWM and PWM for the ULA-to-ULA scenario, where $N=100$, $\lambda=0.01$ m, and $\theta=\phi=0^\circ$.}}
		\begin{tabular}{ccc}\hline
			\text{Different demarcations (m) } & $M=10$& $M=100$ \\ \hline
			\text { Classical Rayleigh distance   } & 60.50 & 200.00  \\ \hline
			\text {The distance in \cite{4799060}  } & 9.13 & 31.74 \\ \hline
			\text{Proposed equi-rank surface} & $14.19$ & 141.90\\ \hline
		\end{tabular}
		\label{comp_ula2ula}
	\end{table}
	
	\begin{figure}[t]
		\begin{centering}
			\includegraphics[width=.4\textwidth]{./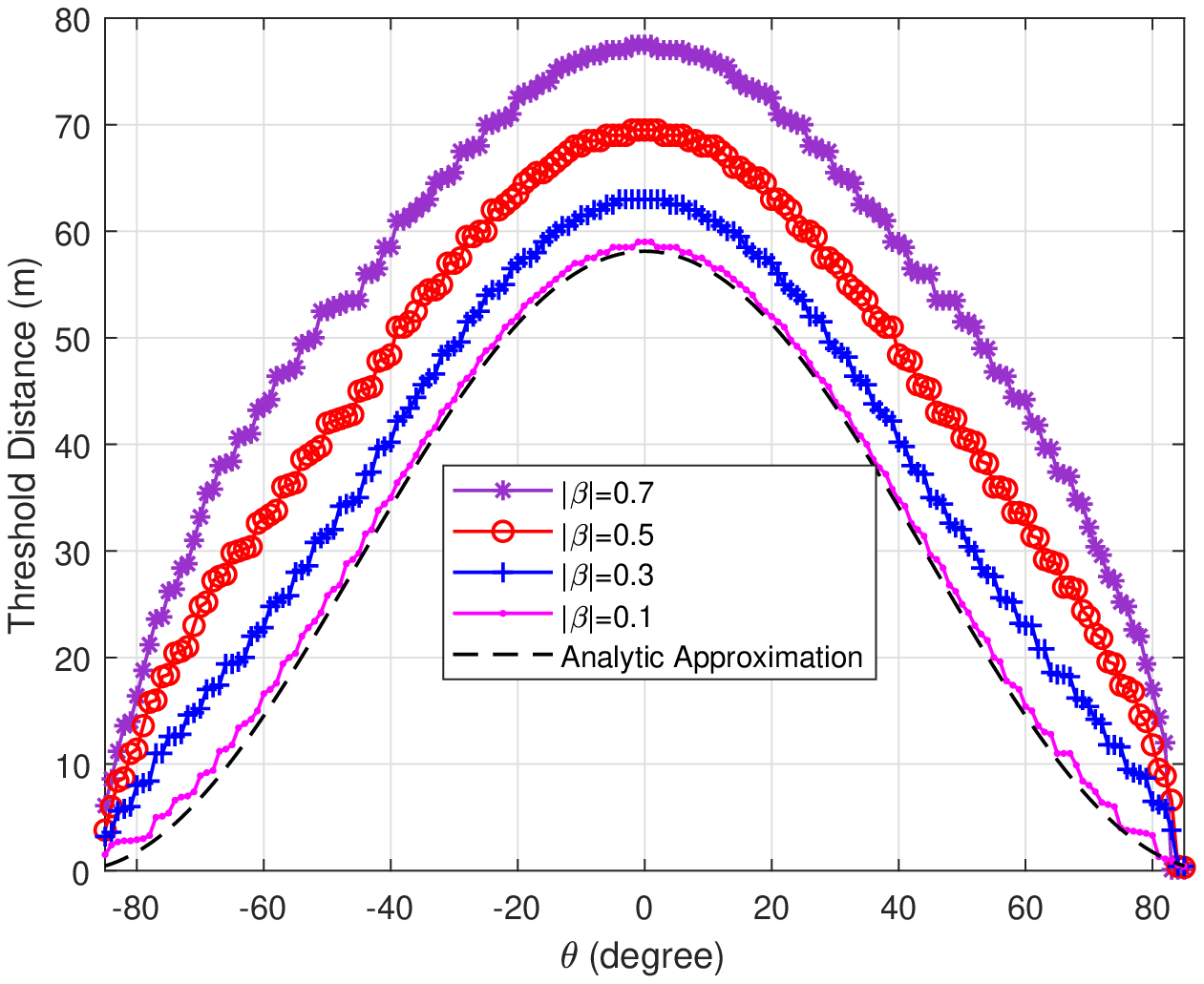}
			\caption{  {ULA-to-ULA scenario: The threshold distance versus $\theta$ with two scatterers, where $\phi=0^\circ$, $\zeta=5^\circ$, $N=128$, $M=32$, and $\lambda=0.01$ m.}}\label{phi0jiaodu5}
		\end{centering}
	\end{figure}
	\begin{figure}[t]
		\begin{centering}
			\includegraphics[width=.4\textwidth]{./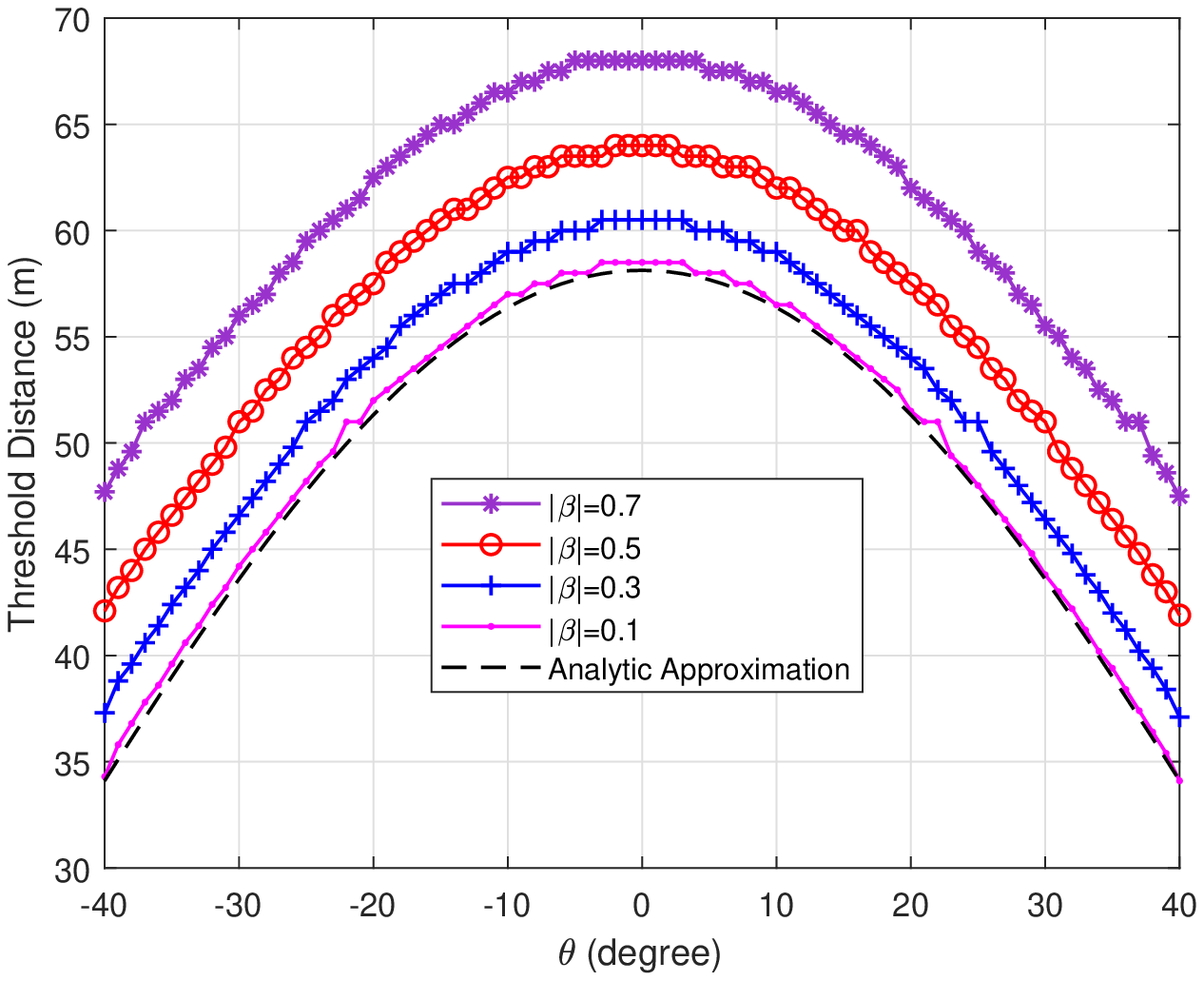}
			\caption{  {ULA-to-ULA scenario: The threshold distance versus $\theta$  with two scatterers, where $\phi=0^\circ$, $\zeta=50^\circ$, $N=128$, $M=32$, and $\lambda=0.01$ m.}}\label{phi0jiaodu50}
		\end{centering}
	\end{figure}
	\begin{figure}[t]
		\begin{centering}
			\includegraphics[width=.4\textwidth]{./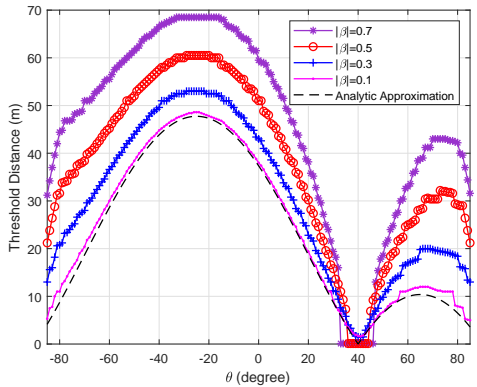}
			\caption{  {ULA-to-ULA scenario: The threshold distance versus $\theta$  with two scatterers, where $\phi=50^\circ$, $\zeta=5^\circ$, $N=128$, $M=32$, and $\lambda=0.01$ m.}}\label{phi50jiaodu5}
		\end{centering}
	\end{figure}
	{Similar to Section II, we consider multiple scatterers here. For simplicity and without loss of generality, suppose there are two scatterers with polar coordinates of $(r/2, \theta+ \zeta)$ and $(r/2, \theta-\zeta)$, respectively, where $\zeta$ is a constant angle. The channel in Eq. \eqref{ula2ula_hmn} can be represented as
		\begin{equation} \label{sca_ula2ula_hnm}
			h_{nm} = \frac{\lambda}{4\pi r_{nm}} e^{ -j \frac{2\pi}{\lambda} r_{nm} } + \sum \limits_{\ell=1}^2 \frac{\beta_\ell \lambda}{4\pi r_{n\ell} r_{\ell m}} e^{ -j \frac{2\pi}{\lambda} (r_{n\ell}+ r_{\ell m}) },
		\end{equation} 
		where $r_{n\ell}$ denotes the distance between the $n$-th antenna at the BS and the $\ell$-th scatterer, $r_{\ell m}$ denotes the distance between the $\ell$-th scatterer and the $m$-th antenna at the user, and $\beta_\ell$ denotes the attenuation and phase shift caused by the $\ell$-th scatterer. 
		For convenience, we assume $|\beta_1|=|\beta_2|=|\beta|$, and $\arg\{\beta_1\}= \arg \{\beta_2\}\sim \mathcal{U}(-\pi, \pi)$. Assume the user and scatterers are located in the far field of the ULA, we can obtain the effective rank of the channel, denoted as $\Xi$. Then, we define the distance where the effective rank of  the channel reaches $\Xi+0.05$ as the equi-rank surface. The user and scatterers are assumed to be located in the right-hand side of the ULA. The threshold distance versus $\theta$ is illustrated in Figs. \ref{phi0jiaodu5}-\ref{phi50jiaodu5} for various values of $\phi$ and $\zeta$, where $N=128$, $M=32$, $\lambda=0.01$ m, and $d_u=d_b=\frac{\lambda}{2}$. The simulations are averaged over 100 independent realizations of scattering phase shifts.   {It is found that when the power of the scatterers, i.e., $|\beta|$ is small, the obtained threshold distance is close to the analytical approximation.}

	\begin{figure}[t]
		\begin{centering}
			\includegraphics[width=.45\textwidth]{./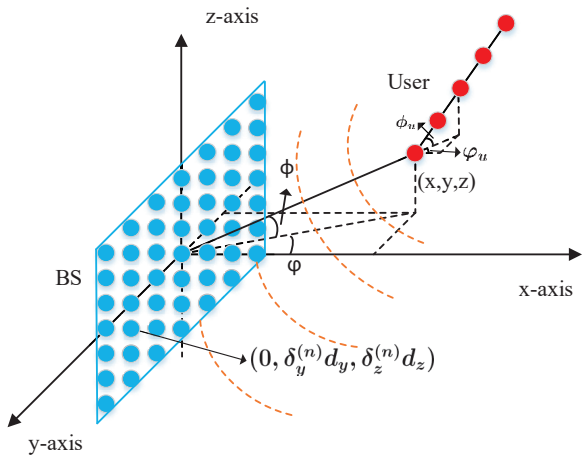}
			\caption{ The system layout of a ULA to a UPA under the SWM.}\label{fig_upa2ula}
		\end{centering}
	\end{figure}
	\section{Equi-Rank Surface of a ULA to a UPA}
	\begin{figure*}
		\begin{equation} \label{upa2ula_rn2}
			\begin{aligned}
				r_{n_y n_z,m}^2
				=& \left(r \cos(\phi) \cos(\varphi)+ \delta^{(m)} d_u \cos (\phi_u) \cos (\varphi_u) \right)^2 +\left(r \cos(\phi) \sin(\varphi)+ \delta^{(m)} d_u \cos \phi_u \sin \varphi_u - \delta_y^{(n)} d_y \right)^2 \\
				& +\left(r \sin(\phi)+ \delta^{(m)} d_u \sin \phi_u -\delta_z^{(n)} d_z \right)^2\\
			\end{aligned}
		\end{equation}
	\end{figure*}
	In this section, we focus on the equi-rank surface between an $M$-element ULA and an $N=N_y\times N_z$  UPA as shown in Fig.~\ref{fig_upa2ula}, where the UPA lies in the YZ-plane, and the midpoint of the UPA lies at $(0,0,0)$. The coordinate of the $n_y n_z$-th antenna is $(0, \delta_y^{(n)} d_y, \delta_z^{(n)} d_z)$, where $\delta_y^{(n)} = n_y -\frac{N_y-1}{2}$ with $n_y=0,1,\ldots,N_y-1$, $\delta_z^{(n)} =n_z -\frac{N_z-1}{2}$ with $n_z=0,1,\ldots,N_z-1$, $d_y$ denotes the antenna spacing along the Y direction, and $d_z$ denotes the antenna spacing along the Z direction. The coordinate of the first antenna of the user is $(x,y,z)$ with its polar coordinate $(r,\phi,\varphi)$. The azimuth angle and elevation angle of the ULA of the user are $\varphi_u$ and $\phi_u$, respectively. Thus, the coordinate of the $m$-th antenna of the user is $(x+ \delta^{(m)}d_u \cos(\phi_u) \cos(\varphi_u) ,y+ \delta^{(m)}d_u \cos(\phi_u) \sin(\varphi_u), z+ \delta^{(m)}d_u \sin(\phi_u))$, where $\delta^{(m)} = m$ with $m=0,1,\ldots,M-1$, and $d_u$ is the antenna spacing of the ULA at the user. The channel matrix between the BS and the user under the SWM is denoted as  $\mathbf{H}_s \in \mathbb{C}^{N\times M}$, where the $((n_z-1)N_y+n_y, m)$-th entry of $\mathbf{H}_s$, i.e., $h_{n_y n_z,m}$, denotes the channel between the $n_y n_z$-th antenna at the BS and the $m$-th antenna at the user, satisfying
	\begin{equation}
		h_{n_y n_z, m} = \frac{\lambda}{4\pi r_{n_y n_z, m}} e^{ -j \frac{2\pi}{\lambda} r_{n_y n_z, m} },
	\end{equation}
	where $r_{n_y n_z,m}$ denotes the distance between the $n_y n_z$-th antenna at the BS and the $m$-th antenna at the user, which can be expressed as Eq. \eqref{upa2ula_rn2}.
	\begin{figure}[t]
		\begin{centering}
			\includegraphics[width=.45\textwidth]{./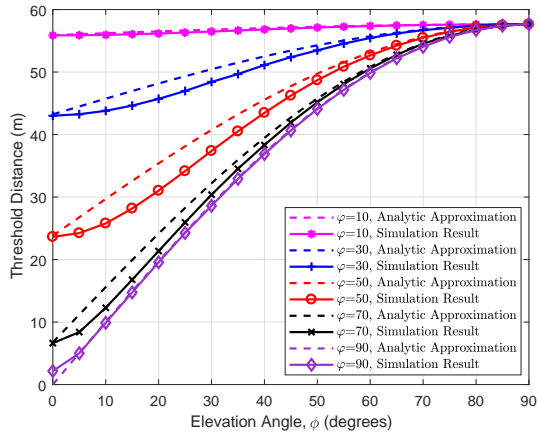}
			\caption{  {ULA-to-UPA scenario: The threshold distance versus the elevation angle $\phi$, when $N_y=N_z=127$, $M=32$, and $\lambda=0.01$ m.}}\label{fig_upa_line1}
		\end{centering}
	\end{figure}
	Similar to Eq. \eqref{def_W}, we are interested in the effective rank of the matrix $\mathbf{W}$. For brevity, we only consider the scenario where the ULA of the user is parallel to the UPA of the BS, which is the most typical setting in practical implementations. Specifically, we fix $\phi_u=0^\circ$ and $\varphi_u=90^\circ$. Hence, the ULA is parallel to the Y-axis. The solutions $r(\phi, \varphi)$ of $\operatorname{erank}(\mathbf{W}) (r,\phi,\varphi) = \delta_\Delta$ denote approximately the same effective rank against the PWM. Thereby, we name the threshold distance $r(\phi,\varphi)$ as \emph{equi-rank surface}. Similar to Section \ref{sec_ula2ula}, we explore the effective rank of the matrix $\mathbf{W}$ through two parts. The threshold $\delta_\Delta$ in this section is set to  1.05 as well.

	\begin{figure}[t]
		\begin{centering}
			\includegraphics[width=.45\textwidth]{./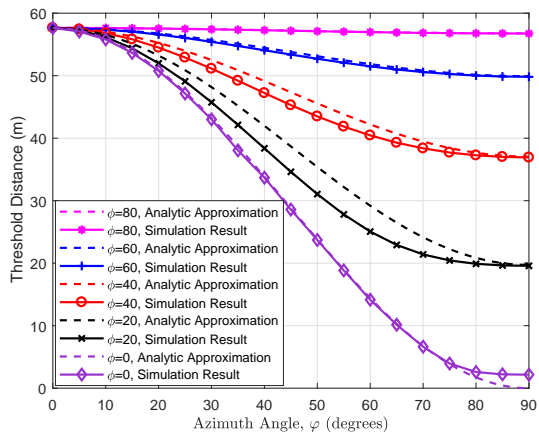}
			\caption{  {ULA-to-UPA scenario:The threshold distance versus the azimuth angle $\varphi$, when $N_y=N_z=127$, $M=32$, and $\lambda=0.01$ m.}}\label{fig_upa_line2}
		\end{centering}
	\end{figure}
	To begin with, let us focus on the first part to investigate the relationship between the threshold  distance and the angles for a fixed antenna structure, as shown in Fig.~\ref{fig_upa_line1} and Fig.~\ref{fig_upa_line2}, where $N_y=N_z=127$, $M=32$, $\lambda=0.01$ m, and $d_u=d_b=\frac{\lambda}{2}$. Fig.~\ref{fig_upa_line1} illustrates the threshold distance versus the elevation angles $\phi$ with a fixed azimuth angle $\varphi$, where the solid and dotted curves represent the simulation results and the analytic approximations, respectively. Note that the analytic expression will be explained later. Denote the threshold distance when $\phi=\varphi=0^\circ$ as $r_1$. There are some interesting observations: 1) When $\varphi=90^\circ$, the  simulation results fit the expression $r_1 \sin(\phi)$ well, except for the points around $\phi=0^\circ$. 2) As the elevation angle $\phi$ increases, all curves increase from $r_1 \cos^2(\varphi)$ to $r_1$.
	
	Fig.~\ref{fig_upa_line2} shows the threshold distance versus the azimuth angle $\varphi$ with a fixed elevation angle $\phi$. When $\phi=0^\circ$, the simulation results fit the expression $r_1 \cos^2 (\varphi)$ well apart from the points around $\varphi=90^\circ$. Furthermore, with the azimuth angle $\varphi$ increasing, all curves decrease from $r_1$ to $ r_1 \sin(\phi)$.
	
	Based on the above observations, we have the following proposition.
	\begin{proposition} \label{prop_upa2ula}
		For the USPA structure, if the threshold distance when $\phi_0=\varphi_0=0^\circ$ is $r_1$, with fixed antenna structure and fixed wavelength, the equi-rank surface for any angles $\phi$ and $\varphi$ is upper bounded by
		\begin{equation} \label{ula2upa_eq1}
			r(\phi,\varphi) \leq r_1 - r_1 \left( 1-|\sin(\phi)|\right) \left( 1-\cos^2(\varphi) \right).
		\end{equation}
	\end{proposition}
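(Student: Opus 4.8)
The plan is to trace the equi-rank threshold back to the second-order (Fresnel) phase coupling between the two arrays, exactly as the reasoning behind Proposition~\ref{prop_ula2ula} suggests: the effective rank climbs above unity when the bilinear near-field term linking the user-array index and the BS-array coordinates sweeps enough phase, so that the threshold distance scales as the product of the two arrays' effective perpendicular-to-line-of-sight apertures divided by $\lambda$.

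First I would expand $r_{n_y n_z,m}$ in Eq.~\eqref{upa2ula_rn2} to second order about the two array centers, keeping only the term bilinear in the user index $m$ and the UPA coordinates $(p_y,p_z)=(\delta_y^{(n)}d_y,\delta_z^{(n)}d_z)$. Writing the line-of-sight unit vector as $\hat{u}=(\cos(\phi)\cos(\varphi),\cos(\phi)\sin(\varphi),\sin(\phi))$ and the user-array direction (with $\phi_u=0^\circ$, $\varphi_u=90^\circ$) as $(0,1,0)$, this cross term becomes
\begin{equation}
\frac{2\pi d_u}{\lambda r}\, m\,\bigl(A\, p_y - B\, p_z\bigr),
\end{equation}
with $A=1-\cos^2(\phi)\sin^2(\varphi)$ and $B=\cos(\phi)\sin(\phi)\sin(\varphi)$. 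The decisive structural fact is that the UPA enters only through the single scalar $\xi=A\,p_y-B\,p_z$, so for the purpose of counting near-field degrees of freedom the USPA collapses to an equivalent linear array, and the threshold obeys $r\propto D_u\,\sigma_\xi/\lambda$, where $\sigma_\xi$ is the effective spread of $\xi$ across the square aperture.

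Next I would evaluate $\sigma_\xi$. Over the uniform square $[-D/2,D/2]^2$ the second moment of $\xi$ equals $(A^2+B^2)D^2/12$, so its effective width is proportional to $\sqrt{A^2+B^2}\,D$; normalizing against the reference distance $r_1$ at $\phi=\varphi=0$ (where $A=1$, $B=0$) yields
\begin{equation}
r(\phi,\varphi)\approx r_1\sqrt{\bigl(1-\cos^2(\phi)\sin^2(\varphi)\bigr)^2+\cos^2(\phi)\sin^2(\phi)\sin^2(\varphi)},
\end{equation}
which I would check reproduces every boundary law quoted before the proposition, namely $r_1\cos^2(\varphi)$ at $\phi=0$, $r_1|\sin(\phi)|$ at $\varphi=90^\circ$, and $r_1$ at $\varphi=0$ or $\phi=90^\circ$. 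The stated bound then follows from a purely algebraic step: putting $s=\sin^2(\varphi)$ and $t=\cos^2(\phi)$ and squaring, the inequality $\sqrt{A^2+B^2}\le\cos^2(\varphi)+|\sin(\phi)|\sin^2(\varphi)$ collapses to $s(1-s)\bigl(t^2+t-2+2\sqrt{1-t}\bigr)\ge0$, i.e.\ to $f(t)=t^2+t-2+2\sqrt{1-t}\ge0$ on $[0,1]$; since $f(0)=f(1)=0$, $f'(0)=0$, $f''(0)>0$, and $f''$ changes sign exactly once, $f$ has a single interior maximum and stays nonnegative, with equality only at $\phi\in\{0,90^\circ\}$ or $\varphi\in\{0,90^\circ\}$, precisely the four edges where the bound is tight.

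The hard part will be the spread estimate: justifying that the effective rank is governed by the measure-weighted standard deviation $\sqrt{A^2+B^2}$ of $\xi$ rather than by its full range $|A|+|B|$. Using the full range instead would give $r_1(|A|+|B|)$, which overshoots the observed $r_1|\sin(\phi)|$ law at $\varphi=90^\circ$; this gap is exactly why the result is posed as an upper bound rather than an equality, because the clean expression $\cos^2(\varphi)+|\sin(\phi)|\sin^2(\varphi)$ safely dominates the approximate equality $\sqrt{A^2+B^2}$ everywhere while remaining tight on the boundaries.
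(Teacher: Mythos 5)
The first thing to note is that the paper contains no proof of Proposition~\ref{prop_upa2ula} at all: it is posited directly from the simulation observations reported around Fig.~\ref{fig_upa_line1} and Fig.~\ref{fig_upa_line2} (the fits $r_1\cos^2(\varphi)$ at $\phi=0^\circ$ and $r_1|\sin(\phi)|$ at $\varphi=90^\circ$, with monotone behavior in between), and the text itself hedges it as a ``simple hence easy-to-understand analytic approximation'' that can even fail when $N_y<N_z$. Your proposal is therefore a genuinely different route, and I checked its two substantive steps. First, the second-order expansion of Eq.~\eqref{upa2ula_rn2} with $\hat u=(\cos\phi\cos\varphi,\cos\phi\sin\varphi,\sin\phi)$ and user-array direction $\hat v=(0,1,0)$ gives the bilinear cross term $\frac{m d_u}{r}\bigl[(\hat u\cdot\hat v)(\hat u\cdot \mathbf p)-\hat v\cdot\mathbf p\bigr]$ with $\mathbf p=(0,p_y,p_z)$, which is exactly $-\frac{m d_u}{r}\,(A p_y - B p_z)$ for your $A=1-\cos^2(\phi)\sin^2(\varphi)$ and $B=\cos(\phi)\sin(\phi)\sin(\varphi)$; and $\sqrt{A^2+B^2}$ does reproduce every boundary law the paper records empirically: $\cos^2(\varphi)$ at $\phi=0$, $|\sin(\phi)|$ at $\varphi=90^\circ$ (since $\sqrt{\sin^4(\phi)+\sin^2(\phi)\cos^2(\phi)}=|\sin(\phi)|$), and $1$ at $\varphi=0$ or $\phi=90^\circ$. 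Second, your algebra is correct: with $s=\sin^2(\varphi)$ and $t=\cos^2(\phi)$, the difference of the squares of the two sides equals $s(1-s)\bigl(t^2+t-2+2\sqrt{1-t}\bigr)$, and $f(t)=t^2+t-2+2\sqrt{1-t}$ satisfies $f(0)=f(1)=0$, $f'(0)=0$, $f''(t)=2-\tfrac12(1-t)^{-3/2}$ changing sign exactly once, so $f$ rises convexly from zero and then, being concave, stays above its chord to $(1,0)$; hence $f\ge 0$ on $[0,1]$ and the bound holds, tight precisely on the four edges $\phi\in\{0^\circ,90^\circ\}$, $\varphi\in\{0^\circ,90^\circ\}$, exactly where the paper's simulations show agreement.

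The one genuine gap is the middle step, which you flag yourself: the claim that the equi-rank threshold scales with the standard deviation $\sigma_\xi\propto\sqrt{A^2+B^2}\,D$ of $\xi=Ap_y-Bp_z$ is a heuristic, because the singular-value profile of a matrix with entries of the form $e^{-j\kappa m\xi}$ depends on the full distribution of $\xi$ over the square aperture --- a trapezoid whose shape varies with $B/A$ --- not on its variance alone, so the proportionality constant between spread and the erank-$1.05$ crossing need not be angle-independent. Your approach also inherits the endfire degeneracy ($A=B=0$ at $\phi=0^\circ,\varphi=90^\circ$, where higher-order terms govern), but that matches the paper's own caveat about those points. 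On balance, since the paper's support for the proposition is purely empirical, your derivation plus the rigorously proved domination inequality supplies strictly more than the paper does: it explains where the bound comes from and why equality occurs exactly on the edges, at the cost of one unproven spread-governs-erank assumption that the paper covers only implicitly through simulation.
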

	\begin{figure}[t]
		\begin{centering}
			\includegraphics[width=.44\textwidth]{./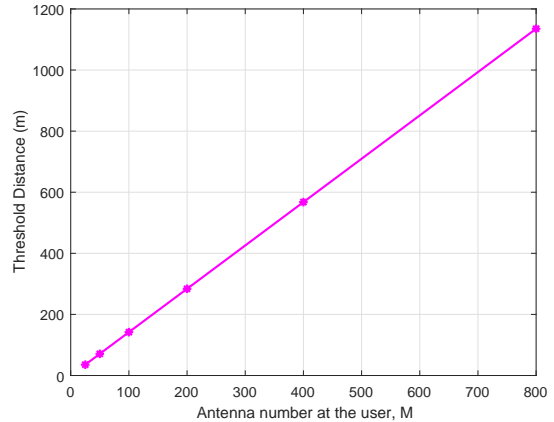}
			\caption{ {ULA-to-UPA scenario: The threshold distance versus the antenna number $M$, when $\phi=\varphi=0^\circ$, $N_y=100$, and $\lambda=0.01$ m.}}\label{fig_upa2ula_line3}
		\end{centering}
	\end{figure}
	
	Proposition \ref{prop_upa2ula} unveils the relationship between the equi-rank surface and the angles for a fixed antenna structure. Fig.~\ref{fig_upa_line1} and Fig.~\ref{fig_upa_line2} show that the analytic approximations in Proposition \ref{prop_upa2ula} fit the simulation results well when $\phi$ or $\varphi$ is close to $0^\circ$ or $90^\circ$. However, the analytic approximation is an upper bound of the simulation results for other angles. It may be caused by the antenna structure at the BS. We find that when $N_y\geq N_z$, the analytic approximation in Proposition \ref{prop_upa2ula} is the upper bound of the simulation results. However, when $N_y < N_z$, the simulation results may be larger than the analytic approximation. It is worth noting that we are only looking for a simple hence easy-to-understand analytic approximation here. More complex and precise approximations are left for future studies.
	\begin{figure}[t]
		\begin{centering}
			\includegraphics[width=.45\textwidth]{./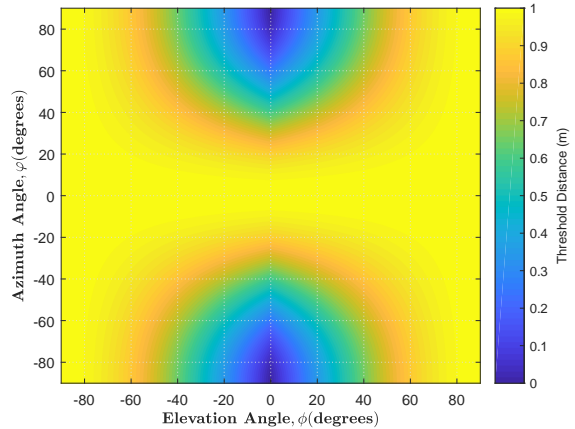}
			\caption{ The equi-rank surface of a USPA to a ULA.}\label{fig_upa_surface}
		\end{centering}
	\end{figure}
	
	Now, let us explore the relationship between the threshold distance and the antenna structure, as shown in Fig.~\ref{fig_upa2ula_line3}, where $\phi=\varphi=0^\circ$, $N_y=100$, and $\lambda=0.01$ m. The antenna number $M$ is selected from the set $\{ 25, 50, 100, 200, 400, 800\}$. It is remarkable that the threshold distance has a linear relationship with the antenna number $M$. Similar to Section \ref{sec_ula2ula}, actually, the threshold distance has a linear relationship with the antenna aperture, when $N_y>6$ and $M>6$. In addition, the threshold distance is independent of $N_z$ because the ULA at the user is parallel to Y-axis and $\phi=\varphi=0^o$. Consequently, we have the following proposition.
	\begin{proposition} \label{prop_upa2ula2}
		{If the threshold distance for $\lambda_0$, $N_{y0}$, $M_0$, $d_{y0}$, $d_{u0}$, and $\phi_0=\varphi_0=0^\circ$ is $r_0$, then the threshold distance for any $\lambda$, $N_y$, $M$, $d_b$ and $d_u$, when $\phi_0=\varphi_0=0^\circ$, can be approximated by}
		\begin{equation} \label{ula2upa_eq2}
			{r_1 = \frac{N_{y} d_{y} M d_u}{ N_{y0} d_{y0} M_0 d_{u0}} \frac{\lambda_0}{\lambda} r_0.}
		\end{equation}
	\end{proposition}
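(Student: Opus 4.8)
The plan is to parallel the scaling argument underlying Proposition \ref{prop_ula2ula2} and to isolate the single dimensionless group that governs $\operatorname{erank}(\mathbf{W})$ in the broadside, parallel configuration ($\phi=\varphi=0^\circ$, $\phi_u=0^\circ$, $\varphi_u=90^\circ$). First I would specialize Eq. \eqref{upa2ula_rn2} to this setting: the user antennas sit at $(r,\,\delta^{(m)}d_u,\,0)$ and the UPA antennas at $(0,\,\delta_y^{(n)}d_y,\,\delta_z^{(n)}d_z)$, so that
\begin{equation}
r_{n_y n_z,m}^2 = r^2 + \left(\delta^{(m)}d_u-\delta_y^{(n)}d_y\right)^2 + \left(\delta_z^{(n)}d_z\right)^2.
\end{equation}
Near the far-field boundary the array is still small relative to $r$, so a first-order expansion yields the phase
\begin{equation}
\frac{2\pi}{\lambda}r_{n_y n_z,m}\approx \frac{2\pi r}{\lambda}+\frac{\pi}{\lambda r}\Big[(\delta^{(m)}d_u)^2+(\delta_y^{(n)}d_y)^2+(\delta_z^{(n)}d_z)^2-2\,\delta^{(m)}d_u\,\delta_y^{(n)}d_y\Big].
\end{equation}

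Second, I would note that the constant term and the three pure-square terms depend on either $m$ alone or $(n_y,n_z)$ alone; absorbing them into diagonal unitary matrices $\mathbf{D}_u$ and $\mathbf{D}_b$ lets me write $\mathbf{H}_s\approx\frac{\lambda}{4\pi r}\,\mathbf{D}_b\,\mathbf{K}\,\mathbf{D}_u$, where the sole entry coupling the BS and user indices is the cross term, $K_{n,m}=\exp\!\big(j\,\tfrac{2\pi}{\lambda r}\,\delta^{(m)}d_u\,\delta_y^{(n)}d_y\big)$. Since $\mathbf{D}_b$ and $\mathbf{D}_u$ are unitary, the singular values of $\mathbf{H}_s$—hence $\operatorname{erank}(\mathbf{W})$—are unchanged, and the amplitude factor $\frac{\lambda}{4\pi r}$ cancels in the normalized singular values. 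Thus the effective rank is a functional of $\mathbf{K}$ alone, and $\mathbf{K}$ enters the parameters only through $\tfrac{2\pi}{\lambda r}\delta^{(m)}d_u\,\delta_y^{(n)}d_y$; in particular it is independent of $N_z$ and $d_z$, matching the observation stated just before the proposition.

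Third, I would argue that, to leading order in the array dimensions, $\operatorname{erank}(\mathbf{W})$ depends on the remaining parameters only through the single group $\rho\triangleq \frac{N_y d_y\, M d_u}{\lambda r}$. When $N_y$ and $M$ are both large (the regime $N_y>6$, $M>6$ flagged in the text), I rescale the indices by $u=\delta^{(m)}/M\in[-\tfrac12,\tfrac12]$ and $v=\delta_y^{(n)}/N_y\in[-\tfrac12,\tfrac12]$, converting the Gram sums defining $\mathbf{W}$ into Riemann integrals whose kernel $\exp(j\,2\pi\rho\,uv)$ contains the parameters only through $\rho$. Consequently the normalized singular-value spectrum, and thus $\operatorname{erank}$, is a function of $\rho$ alone. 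Imposing $\operatorname{erank}(\mathbf{W})=\delta_\Delta$ then fixes $\rho$ at a constant $\rho_\star(\delta_\Delta)$; equating $\rho$ at the reference configuration with $\rho$ at the target gives $\frac{N_{y0}d_{y0}M_0 d_{u0}}{\lambda_0 r_0}=\frac{N_y d_y M d_u}{\lambda r_1}$, which rearranges to Eq. \eqref{ula2upa_eq2}. This simultaneously explains the linearity in $N_y d_y$ and $M d_u$ seen in Fig.~\ref{fig_upa2ula_line3} and the inverse dependence on $\lambda$ inherited from the Rayleigh-distance intuition.

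The main obstacle is the third step: rigorously justifying that $\operatorname{erank}$ collapses to a function of the lone parameter $\rho$. The continuum limit is only approximate at finite $N_y,M$, and the chirp-type kernel $e^{j2\pi\rho uv}$—whose spectrum is governed by a prolate/space-bandwidth structure—has a singular-value distribution whose shape drifts slightly with the aspect ratio $M/N_y$ even after $\rho$ is fixed. I would control this by arguing that near the far-field boundary the spectrum is dominated by one principal mode with the residual mass a smooth function of $\rho$, so the leading-order scaling is exact and the aspect-ratio dependence only perturbs the constant $\rho_\star$; the close agreement between the approximation and the simulated markers in Figs.~\ref{fig_upa_line1}, \ref{fig_upa_line2}, and \ref{fig_upa2ula_line3} corroborates that these corrections are negligible in the regime of interest.
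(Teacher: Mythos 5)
Your proposal is correct, and it takes a genuinely different route from the paper, because the paper never proves Proposition~\ref{prop_upa2ula2} analytically at all: only Theorems~1--2 and Propositions~1--2 receive appendix proofs, while this proposition is distilled from simulations --- the observed linear growth of the threshold distance with $M$ in Fig.~\ref{fig_upa2ula_line3} (and with the apertures once $N_y>6$, $M>6$), the observed independence of $N_z$ --- combined with an inverse-wavelength dependence borrowed by analogy from the classical Rayleigh distance $\frac{2D^2}{\lambda}$. You instead supply the analytical mechanism the paper omits: a Fresnel (paraxial) expansion of Eq.~\eqref{upa2ula_rn2} at broadside, absorption of the $m$-only and $(n_y,n_z)$-only quadratic phase terms into diagonal unitaries that leave singular values untouched, and reduction of $\operatorname{erank}(\mathbf{W})$ to the chirp kernel $\exp\left(j2\pi\rho uv\right)$ governed by the single dimensionless group $\rho=\frac{N_y d_y\, M d_u}{\lambda r}$; imposing $\operatorname{erank}=\delta_\Delta$ pins $\rho$ at a constant $\rho_\star(\delta_\Delta)$ and yields Eq.~\eqref{ula2upa_eq2} directly. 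Your factorization moreover \emph{explains}, rather than merely records, the empirical facts the paper relies on: the $N_z$-independence follows because the rows of $\mathbf{K}$ repeat $N_z$ times, scaling every singular value by $\sqrt{N_z}$, which cancels in the normalized spectrum entering the effective rank, and the joint linearity in $N_y d_y$ and $M d_u$ with the $\lambda_0/\lambda$ factor is exactly the statement that $\rho$ is the sole control parameter. Two small repairs: since $\delta^{(m)}=m$ in this section, your variable $u$ actually ranges over $[0,1-\frac{1}{M}]$ rather than $[-\frac{1}{2},\frac{1}{2}]$, but the recentering shift contributes only an additional $n_y$-dependent phase, which the same diagonal-unitary argument absorbs; and in the continuum limit the kernel on the fixed square genuinely depends on $\rho$ alone, so the aspect-ratio drift you worry about in your final paragraph is purely a finite-size discretization effect, consistent with the paper's own caveat that the linear law requires $N_y$ and $M$ larger than about $6$. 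The residual rigor gap you flag --- controlling the discrete-to-continuum spectral error --- is real, but since the paper's support for the proposition is simulation plus analogy, your argument is strictly more informative: it identifies the equi-rank threshold as a level set of a universal one-parameter family, with only the constant $\rho_\star(\delta_\Delta)$ (equivalently $r_0$, as in Table~\ref{eR_ula2ula} for the ULA-to-ULA analogue) left to calibration.
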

	Proposition \ref{prop_upa2ula2} reveals the relationship between the equi-rank surface and the antenna structure when $\phi=\varphi=0^\circ$. When $\lambda=0.01$ m, $d_{y0}=d_{u0}=\frac{\lambda}{2}$, and $N_{y0}=M_0=100$, the threshold distance $r_1\approx 141.91$ m. Combing Proposition \ref{prop_upa2ula} and Proposition \ref{prop_upa2ula2}, the equi-rank surface for any USPA structure at the BS, any ULA structure at the user, and any angles can be estimated. For instance, when $\lambda=0.005$ m, $d_{y}=d_z=d_{u}=\frac{\lambda}{2}$, $N_y=N_z=256$, $M=64$, and $\phi=\varphi=60^\circ$,  we have $r_0=116.25$ m based on Eq. \eqref{ula2upa_eq2}, $r=104.57$ m based on Eq. \eqref{ula2upa_eq1}, while the actual distance obtained by simulation results is $103.94$ m, demonstrating the accuracy of the analytic approximation. It is worth noting that similar to Section \ref{sec_ula2ula}, Proposition \ref{prop_upa2ula} and Proposition \ref{prop_upa2ula2} also work when $\delta_\Delta$ is set to different values. Finally, Fig.~\ref{fig_upa_surface} displays the equi-rank surface of a USPA to a ULA when $r_1=1$ m in Eq. \eqref{ula2upa_eq1} for more intuitive understanding.

	\section{Conclusion}
	This paper studied the demarcations between applicable regions for SWM and PWM for XL-array communication from the viewpoint of channel gain and rank. For the single-LoS-path scenario, closed-form expressions of the normalized received power for a single point to a ULA/UCPA were derived,and it was proved that $|\theta|=\frac{\pi}{6}$ and $\cos^2(\phi) \cos^2(\varphi)=\frac{1}{2}$ were the dividing angles and curves, respectively. As for the MIMO case, a closed-form expression of the equi-rank surface was obtained.  With the analytical expressions derived above, the equi-rank surface for any antenna structure and any angle could be well estimated.  
	
	{The impact of multiple scatterers on the SWM/PWM demarcation was also considered. For the point-to-ULA case, the threshold distance with the existence of LoS and scattered paths was larger than that in the single-LoS-path case since the phases of different paths could not be cancelled out under MRC. The threshold distance also increased with the number of scatterers in both LoS and NLoS environments, and the increase is more prominent for the NLoS environment. With fewer than 20 scattered paths in the LoS environment, the near field of a ULA can reach over 200 m, and this number is even larger for the NLoS environment, which are beyond the range of typical mmWave cells,  indicating the necessity of the consideration of SWM.  For the ULA-to-ULA case, the obtained threshold distance approached the analytical approximation when the power of scatterers was small.
		Future work may consider more accurate analytic approximations of the boundary surface for the ULA-to-UPA scenario as well as the effect of multiple scatterers in the scenarios containing UPAs.}
	
	\section*{Appendix}
	\subsection{Proof of Theorem 1} \label{appendix_theorem1}
	\begin{figure*}[htbp]
		\begin{equation} \label{prove_ula1}
			\begin{aligned}
				\mu(r, \theta) &=\frac{r^2}{N} \sum \limits_{n=1}^N \frac{1}{r_n^2} \\
				&= \frac{r^2}{N} \sum \limits_{n=1}^N \frac{1}{ r^2- 2r \sin (\theta) d \delta^{(n)} + \left(\delta^{(n)} d \right)^2 }  \\
				&= \frac{r^2}{N} \sum \limits_{n=0}^{N-1} \frac{1}{r^2- 2r \sin (\theta) d \left(n-\frac{N-1}{2} \right) + \left(n-\frac{N-1}{2} \right)^2 d^2}  \\
				&= \frac{r^2}{N} \sum \limits_{n=-\frac{N-1}{2}}^{\frac{N-1}{2}} \frac{1}{r^2 -2r \sin (\theta) nd+ n^2 d^2} \\
				&= r^2 \sum \limits_{m=-\frac{1}{2}+\frac{1}{2N}}^{\frac{1}{2}-\frac{1}{2N}} \frac{1}{N^2 d^2 m^2 -2r Nd\sin (\theta) m+r^2} \frac{1}{N}. \\
			\end{aligned}
		\end{equation}
		\begin{equation} \label{prove_ula2}
			\begin{aligned}
				\mu(r, \theta)
				&= r^2 \sum \limits_{m=-\frac{1}{2}+\frac{1}{2N}}^{\frac{1}{2}-\frac{1}{2N}} \frac{1}{N^2 d^2 m^2 -2r Nd\sin (\theta) m+r^2} \Delta_m \\
				&\overset{N\rightarrow \infty}{=} r^2\int_{-1/2}^{1/2} \frac{1}{N^2 d^2 m^2 -2r Nd\sin (\theta) m+r^2} \mathrm{d} m.
			\end{aligned}
		\end{equation}
		\begin{equation} \label{prove_ula_dao1}
			\begin{aligned}
				\frac{\partial \mu (r,\theta)}{\partial r} =&\; \dfrac{\left(-\frac{N d}{2\cos\left(\theta\right)\left(\left(\frac{N d}{2\cos\left(\theta\right)\,r}+\tan\left(\theta\right)\right)^2+1\right)r^2}-\frac{N d}{2\cos\left(\theta\right)\left(\left(\frac{N d}{2\cos\left(\theta\right)\,r}-\tan\left(\theta\right)\right)^2+1\right)r^2}\right)r}{N d\cos\left(\theta\right)}\\
				&+\dfrac{\arctan\left(\frac{N d}{2\cos\left(\theta\right)\,r}+\tan\left(\theta\right)\right)+\arctan\left(\frac{N d}{2\cos\left(\theta\right)\,r}-\tan\left(\theta\right)\right)}{N d\cos\left(\theta\right)}.
			\end{aligned}
		\end{equation}
		\begin{equation} \label{prove_ula_dao_f12}
			\begin{aligned}
				f_1(r,\theta) = &\;  16 \cos^4 (\theta) \left(\tan^2 (\theta)+ 1 \right) \left( 3 \tan^2(\theta)-1 \right) r^4  -8 N^2 d^2 \cos^2 (\theta) \left(\tan^2 (\theta)+ 1 \right) r^2-N^4 d^4. \\
				f_2(r,\theta) = &\; \left(\left(4\cos^2\left(\theta\right)\tan^2\left(\theta\right)+4\cos^2\left(\theta\right)\right)r^2-4N d\cos\left(\theta\right)\tan\left(\theta\right)\,r+N^2 d^2\right)^2\\
				& \times \left(\left(4\cos^2\left(\theta\right)\tan^2\left(\theta\right)+4\cos^2\left(\theta\right)\right)r^2+4N d\cos\left(\theta\right)\tan\left(\theta\right)\,r+N^2 d^2\right)^2.
			\end{aligned}
		\end{equation}
		\begin{equation} \label{prove_upa_mu1}
			\mu(r, 0,0)= r^{2} \int_{-\pi}^{\pi} \int_{0}^{1 / \sqrt{\pi}} \frac{\rho}{\left(N_y^2 d_y^2 \cos^2(\theta)+ N_z^2 d_z^2 \sin^2(\theta)\right) \rho^{2}+r^{2}} \mathrm{d} \rho\mathrm{d} \theta.
		\end{equation}
		\begin{equation} \label{prove_upa_mu2}
			\begin{aligned}
				\mu(r, 0,0)=&\; r^{2} \int_{0}^{1 / \sqrt{\pi}} \int_{-\pi}^{\pi} \frac{\rho} {r^{2}+ \rho^2 N_z^2 d_z^2+ \rho^{2} \left(N_y^2 d_y^2 - N_z^2 d_z^2 \right) \cos^2(\theta) } \mathrm{d} \theta \mathrm{d} \rho \\
				=&\; 4r^{2} \int_{0}^{1 / \sqrt{\pi}} \int_{0}^{\pi/2} \frac{\rho} {r^{2}+ \rho^2 N_z^2 d_z^2+ \rho^{2} \left(N_y^2 d_y^2 - N_z^2 d_z^2 \right) \cos^2(\theta) } \mathrm{d} \theta \mathrm{d} \rho.
			\end{aligned}
		\end{equation}
		\begin{equation} \label{prove_upa_mu3}
			\mu(r, 0,0)= \frac{2\pi r^{2}}{N_y d_y N_z d_z} \ln \left( \frac{N_z d_z \sqrt{\frac{N_y^2 d_y^2}{\pi r^2}+1} + N_y d_y \sqrt{\frac{N_z^2 d_z^2}{\pi r^2}+1}} {N_y d_y+ N_z d_z} \right).
		\end{equation}
	\end{figure*}
	
	\begin{figure*}	
		\begin{equation} \label{prove_ucpa1}
			\mu(r, \phi,\varphi) =r^{2} \int_{0}^{1 / \sqrt{\pi}} \int_{-\pi}^{\pi} \frac{\rho}{-2 r \cos (\phi) \sin (\varphi) D \rho \cos (\theta)-2 r \sin (\phi) D \rho \sin (\theta)+D^{2} \rho^{2}+r^{2}} \mathrm{d} \theta \mathrm{d} \rho.
		\end{equation}
		\begin{equation} \label{prove_ucpa2}
			\mu(r, \phi,\varphi)= 2 \pi r^{2} \int_{0}^{1 / \sqrt{\pi}} \frac{\rho}{\sqrt{D^{4} \rho^{4}+ \left(4 \cos ^{2}(\phi) \cos ^{2}(\varphi)-2\right) r^{2} D^{2} \rho^{2}+r^{4}}} \mathrm{d} \rho .
		\end{equation}
		\begin{equation} \label{prove_ucpa3}
			\mu (r,\beta) =  \frac{\pi r^{2}}{2 D^{2}} \left\{\ln \left[\frac{2 D^{2} \sqrt{\frac{D^{4}}{\pi^{2}}+\frac{\left(4 \beta -2\right) r^{2} D^{2}}{\pi}+r^{4}}+\left(\frac{2 D^{4}}{\pi}+\left(4 \beta -2\right) r^{2} D^{2}\right)}{2 D^{2} \sqrt{\frac{D^{4}}{\pi^{2}}+\frac{\left(4 \beta -2\right) r^{2} D^{2}}{\pi}+r^{4}}-\left(\frac{2 D^{4}}{\pi}+\left(4 \beta -2\right) r^{2} D^{2}\right)}\right]  +\ln \left[\frac{1- \beta }{ \beta }\right]\right\}.
		\end{equation}
	\end{figure*}
	The coefficient $\mu(r,\theta)$ can be calculated as Eq. \eqref{prove_ula1}.	Note that the operator $\sum \limits_{m=-\frac{1}{2}+\frac{1}{2N}}^{\frac{1}{2}-\frac{1}{2N}}$ calculates the summation with $m=-\frac{1}{2}+\frac{1}{2N}, -\frac{1}{2}+\frac{3}{2N},\ldots, \frac{1}{2}-\frac{1}{2N}$. Let $\Delta_m = \frac{1}{N}$, Eq. \eqref{prove_ula1} can be translated into Eq. \eqref{prove_ula2}. 	Since $\int \dfrac{1}{am^2-bm+c} \mathrm{d}m = \dfrac{2\arctan\left(\frac{2am-b}{\sqrt{4ac-b^2}}\right)}{\sqrt{4ac-b^2}} +Const$, where $Const$ denotes a constant number, by letting $a=N^2 d^2$, $b=2r Nd\sin \theta$ and $c=r^2$, we have
	\begin{equation} \label{ula_exp}
		\begin{aligned}
			\mu(r, \theta) =& \; \frac{r}{Nd\cos (\theta)} \left[ \arctan{\left(\frac{Nd}{2 r \cos (\theta)} + \tan{ (\theta) }\right) } \right. \\
			& + \left. \arctan{\left(\frac{Nd}{2 r \cos (\theta)} - \tan{ (\theta) } \right) } \right].
		\end{aligned}
	\end{equation}
	{It is worth noting that the closed-form expression in Eq.  \eqref{ula_exp} is accurate even when the antenna number $N$ is small, though it is derived when $N\rightarrow \infty$.}
	
	\subsection{Proof of Proposition 1} \label{appendix_prop1}
	Since $\mu(r, \theta)$ is an even function over the distance $r$. Here, we only consider $0\leq \theta< \frac{\pi}{2}$. We will prove the proposition by deriving its first partial derivative, shown in Eq. \eqref{prove_ula_dao1}, and the second partial derivative over the distance $r$,
	\begin{equation}
		\begin{aligned}
			\frac{\partial^2 \mu (r,\theta)}{\partial r^2} =&\; \frac{8 N^2 d^2 f_1(r,\theta)}{f_2(r,\theta)},
		\end{aligned}
	\end{equation}
	where $f_1(r,\theta)$ and $f_2(r,\theta)$ are shown in Eq. \eqref{prove_ula_dao_f12}.
	
	Let us focus on $f_1 (r, \theta)$. We consider the following two cases.
	
	1) When $\tan^2 (\theta)=\frac{1}{3}$, i.e., $\theta= \frac{\pi}{6}$, $f_1 =-8N^2 d^2- N^4 d^4<0$. Thereby, $\frac{\partial^2 \mu (r,\theta)}{\partial r^2}$ is always smaller than zero. Note that $\frac{\partial \mu (r,\theta)}{\partial r} \rightarrow 0$ as $r \rightarrow +\infty$. As a result, $\frac{\partial \mu (r,\theta)}{\partial r} \rightarrow 0$ is always larger than zero. In this case, the coefficient $\mu$ always increases as the distance increase, and is always smaller than one.
	
	2) When $\tan^2 (\theta)\neq \frac{1}{3}$, denote $c=-N^4 d^4$, $a= 16 \cos^4 (\theta) \left(\tan^2 (\theta)+ 1 \right) \left( 3 \tan^2(\theta)-1 \right)$,  and $b= -8 N^2 d^2 \cos^2 (\theta) \left(\tan^2 (\theta)+ 1 \right)$, the discriminant $\Delta = \left( 16 N^2 d^2 \cos^2(\theta) \tan (\theta)  \right)^2 \left(\tan^2(\theta) +1\right) >0$. Thus, the roots $r_{f_1}$ of $f_1(r, \theta)=0$ is given by
	\begin{equation}
		r_{f1} = N^2 d^2 \frac{\tan^2 (\theta)+1 \pm 2\tan(\theta) \sqrt{\tan^2(\theta)+1}}{4\cos^2(\theta) \left(\tan^2 (\theta)+ 1 \right) \left( 3 \tan^2(\theta)-1 \right)}.
	\end{equation}
	It can be found that when $0 \leq \theta< \frac{\pi}{6}$, $ \tan^2 (\theta)+1 - 2\tan(\theta) \sqrt{\tan^2(\theta)+1} >0$; when $\frac{\pi}{6} < \theta< \frac{\pi}{2}$, $ \tan^2 (\theta)+1 - 2\tan(\theta) \sqrt{\tan^2(\theta)+1} <0$.
	When $\tan^2(\theta) < \frac{1}{3}$, i.e., $0 \leq \theta < \frac{\pi}{6}$, the two numerators of $r_{f1}$ are larger than zero. The denominator of $r_{f1}$ is smaller than zero.
	Thus, the two roots $r_{f1}$ are smaller than zero. Therefore, $\frac{\partial^2 \mu (r,\theta)}{\partial r^2}$ is always smaller than zero.
	$\frac{\partial \mu(r,\theta)}{\partial r}$ is always bigger than 0. In this case, the coefficient $\mu$ is always smaller than one.
	
	When $\tan^2(\theta) > \frac{1}{3}$, i.e., $\frac{\pi}{6} < \theta<  \frac{\pi}{2}$, one numerators of $r_{f1}$ is larger than zero, and the other is smaller than zero. The denominator of $r_{f1}$ is larger than zero. Thus, one root is smaller than zero, and the other is larger than zero. Denote
	\begin{equation}
		r_{1} \triangleq Nd \sqrt{ \frac{\tan^2 (\theta)+1 + 2\tan(\theta) \sqrt{\tan^2(\theta)+1}}{4\cos^2(\theta) \left(\tan^2 (\theta)+ 1 \right) \left( 3 \tan^2(\theta)-1 \right)} },
	\end{equation}
	then when $0<r<r_1$, $\frac{\partial^2 \mu (r,\theta)}{\partial r^2} <0$; when $r> r_1$, $\frac{\partial^2 \mu (r,\theta)}{\partial r^2}>0$. In this case, the coefficient $\mu (r,\theta)$ will first increase, and then decrease as the distance $r$ increases. $r_1$ is the distance where the coefficient $\mu(r,\theta)$ drops the fastest. If $\mu (r,\theta)$ reaches its peak at the distance $r_2$, then $r_2$ is slightly smaller than $r_1$. When $ -\frac{\pi}{2} <\theta < -\frac{\pi}{6}$, we have
	\begin{equation}
		r_{1} = Nd \sqrt{ \frac{\tan^2 (\theta)+1 - 2\tan(\theta) \sqrt{\tan^2(\theta)+1}}{4\cos^2(\theta) \left(\tan^2 (\theta)+ 1 \right) \left( 3 \tan^2(\theta)-1 \right)} }.
	\end{equation}
	As such, $r_1$ can be expressed as
	\begin{equation}
		r_{1} = Nd \sqrt{ \frac{\tan^2 (\theta)+1 + 2|\tan(\theta)| \sqrt{\tan^2(\theta)+1}}{4\cos^2(\theta) \left(\tan^2 (\theta)+ 1 \right) \left( 3 \tan^2(\theta)-1 \right)} }.
	\end{equation}
	
	\subsection{Proof of Theorem 2} \label{appendix_theorem2}
	Note that area one for USPA is identical to that for a UCPA with a radius of $\frac{1}{\sqrt{\pi}}$. Thus, the integration area for USPA is $A=\{(\rho, \theta)|0\leq \rho \leq \frac{1}{\sqrt{\pi}}, -\pi \leq \theta \leq \pi \}$. Denote $m_y = \rho \cos(\theta)$, and $m_z=\rho \sin(\theta)$, let us consider the following two cases.
	
	1) When $\phi=\varphi=0$, considering uniform elliptical planar array structure, $\mu(r,0,0)$ can be expressed as Eq. \eqref{prove_upa_mu1}. Without loss of generality, assume $N_y d_y \geq N_z d_z$, Eq. \eqref{prove_upa_mu1} can be translated into Eq. \eqref{prove_upa_mu2}.
	Since $\int \frac{a}{b+c \cos^2 (\theta)} \mathrm{d} \theta = \dfrac{a\arctan\left(\frac{\sqrt{b}\tan\left(\theta\right)} {\sqrt{c+b}}\right)}{\sqrt{b}\sqrt{c+b}} +Const$, let $a=\rho$, $b=r^{2}+ \rho^2 N_z^2 d_z^2$, and $c=\rho^{2} \left(N_y^2 d_y^2 - N_z^2 d_z^2 \right)$, yielding
	\begin{equation}
		\mu(r, 0,0)= \int_{0}^{1 / \sqrt{\pi}} \frac{ 2\pi r^{2} \rho}{ \sqrt{r^2+ \rho^2 N_z^2 d_z^2} \sqrt{r^2 + \rho^2 N_y^2 d_y^2}} \mathrm{d} \rho.
	\end{equation}
	Since $\int \dfrac{ \rho }{\sqrt{a+b \rho^2 }\sqrt{a+ c \rho^2 }} \mathrm{d} x = \dfrac{\ln\left(\sqrt{b}\sqrt{c \rho ^2+a}+\sqrt{c}\sqrt{b \rho ^2+a}\right)}{\sqrt{b}\sqrt{c}}+ Const$, let $a=r^2$, $b=N_z^2 d_z^2$, and $c=N_y^2 d_y^2$, $\mu(r,0,0)$ can be represented as Eq. \eqref{prove_upa_mu3}.
	
	When it is UCPA structure, i.e., $N_y d_y=N_z d_z = D$, we have
	\begin{equation}
		\mu(r, 0,0)= \frac{\pi r^{2}}{D^2} \ln \left( \frac{D^2}{\pi r^2}+1 \right).
	\end{equation}
	
	2) When $\phi \neq 0$ or $\varphi \neq 0$, considering the UCPA structure, i.e., $N_y d_y=N_z d_z = D$,  $\mu(r,\phi,\varphi)$ can be calculated as Eq. \eqref{prove_ucpa1}.
	Since $\int \frac{1}{a \cos(\theta)+b \sin(\theta)+c} \mathrm{d} \theta = \frac{2}{\sqrt{c^2-a^2-b^2}} \arctan\left(\frac{(c-a)\tan(\theta/2)+b}{\sqrt{c^2-a^2-b^2}} \right) + Const$, let $a = -2 r \cos (\phi) \sin (\varphi) D \rho$, $b=-2 r \sin (\phi) D \rho$, and $c=D^{2} \rho^{2}+r^{2}$, $\mu(r,\phi,\varphi)$ can be translated into Eq. \eqref{prove_ucpa2}.
	Since $\int \frac{\rho}{\sqrt{a \rho^4+b \rho^2+c}} \mathrm{d} \rho = \frac{1}{4\sqrt{a}} \ln \left(\frac{2\sqrt{a} \sqrt{a \rho^4+b \rho^2+c} +(2a\rho^2+b)}{2\sqrt{a} \sqrt{a \rho^4+b \rho^2+c} -(2a\rho^2+b)} \right) + Const$, let $a = D^4$, $b=\left(4 \cos ^{2}(\phi) \cos ^{2}(\varphi)-2\right) r^{2} D^{2}$, and $c=r^4$, $\mu(r,\phi,\varphi)$ can be expressed as Eq. \eqref{prove_ucpa3}, where $\beta=\cos ^{2}(\phi) \cos ^{2}(\varphi)$.
	\begin{figure*}
		\begin{equation} \label{prove_upa_dao1}
			\begin{aligned}
				\frac{\partial \mu(r, \phi,\varphi)}{\partial r}
				=&\; \frac{\pi}{2D^2} \left(2r f_1(r,\beta) - \dfrac{4D^2 r}{{\pi} \sqrt{r^4+\frac{D^2 (4\beta-2) r^2}{{\pi}}+\frac{D^4}{{\pi}^2}} } \right)
			\end{aligned}
		\end{equation}
		\begin{equation} \label{prove_upa_dao2}
			\begin{aligned}
				\frac{\partial^2 \mu(r, \phi,\varphi)}{\partial r^2}
				=&\; \frac{\pi}{2D^2} \left(2 f_1(r,\beta) -  \dfrac{16D^2 }{{\pi}\sqrt{r^4+\frac{D^2 (4\beta-2) r^2}{{\pi}}+\frac{D^4}{{\pi}^2}}} +  \dfrac{4D^2 \left(3 \pi^2 r^4+2\pi D^2 (4\beta-2) r^2+D^4 \right)} {{\pi}^3  \left(r^4+\frac{D^2 (4\beta-2) r^2}{{\pi}}+\frac{D^4}{{\pi}^2}\right)^\frac{3}{2}} \right)
			\end{aligned}
		\end{equation}
		\begin{equation} \label{prove_upa_dao3}
			\begin{aligned}
				\frac{\partial^3 \mu(r, \phi,\varphi)}{\partial r^3}
				=&\; \frac{3 (4\beta-2) D^2 \pi^3 r^6+ 10 D^4 \pi^2 r^4+ (4\beta-2) D^6 \pi r^2-2D^8}{2D^2 \pi^4 \sqrt{r^4+\frac{D^2 (4\beta-2) r^2}{{\pi}}+\frac{D^4}{{\pi}^2}}}
			\end{aligned}
		\end{equation}
		\begin{equation} \label{prove_upa_f1}
			\begin{aligned}
				f_1 (r,\beta) = \ln \left[\frac{2 D^{2} \sqrt{\frac{D^{4}}{\pi^{2}}+\frac{\left(4 \beta -2\right) r^{2} D^{2}}{\pi}+r^{4}}+\left(\frac{2 D^{4}}{\pi}+\left(4 \beta -2\right) r^{2} D^{2}\right)}{2 D^{2} \sqrt{\frac{D^{4}}{\pi^{2}}+\frac{\left(4 \beta -2\right) r^{2} D^{2}}{\pi}+r^{4}}-\left(\frac{2 D^{4}}{\pi}+\left(4 \beta -2\right) r^{2} D^{2}\right)}\right]  +\ln \left[\frac{1- \beta }{ \beta }\right],
			\end{aligned}
		\end{equation}
	\end{figure*}
	
	\subsection{Proof of Proposition 2} \label{appendix_prop2}
	When $\phi=\varphi=0$, since $\frac{\ln(x+1)}{x}<1$ for $x>0$, $\mu(r,0,0)$ is always smaller than one.
	
	When $\phi \neq 0$ or $\varphi \neq 0$, the first, second and third partial derivative of $\mu(r,\beta)$ can be expressed as Eq. \eqref{prove_upa_dao1}, Eq. \eqref{prove_upa_dao2}, and Eq. \eqref{prove_upa_dao3}, respectively, where $f_1(r,\beta)$ is given in Eq. \eqref{prove_upa_f1}. 
	Let $f_2(r,\beta) =3 (4\beta-2) D^2 \pi^3 r^6+ 10 D^4 \pi^2 r^4+ (4\beta-2) D^6 \pi r^2-2D^8$, we have
	\begin{equation}
		\begin{aligned}
			\frac{\partial f_2(r,\beta)}{\partial r}
			=&\; 2{\pi}D^2r \left(9{\pi}^2 (4\beta-2) r^4+20{\pi}D^2r^2\right) \\
			& +  2{\pi}D^2r \left(D^4 (4\beta-2) \right) \\
			\triangleq & \; 2{\pi}D^2r f_3(r,\beta)
		\end{aligned}
	\end{equation}
	Let us consider the following three cases.
	
	1) When $\beta = \frac{1}{2}$, $\frac{\partial f_2(r,\beta)}{\partial r}
	= 40 {\pi}^2 D^4 r^2 >0$. Since $f_2(0,\beta)=-2D^8<0$ and $f_2(r\rightarrow +\infty,\beta)>0$, $f_2(r,\beta)=0$ only has one positive root.
	
	2) When $\frac{1}{2}<\beta <1 $, the discriminant of $f_3(r,\beta)$ is $\Delta = 4 \pi^2 D^4 \left( 100-9 (4\beta-2)^2 \right) > 0$ due to $ 0 \leq \beta<1$. The roots of $f_3(r,\beta)=0$ are $ \frac{D^2 \left(-10\pm \sqrt{100-9 (4\beta-2)^2 }\right)}{9\pi (4\beta-2)} <0$. Thus, $f_3(r,\beta)>0$, and $\frac{\partial f_2(r,\beta)}{\partial r}>0$, $f_2(r,\beta)$ only has one positive root.
	
	3) When $ 0< \beta <\frac{1}{2}$, the roots of $f_3(r,\beta)=0$ are $ \frac{D^2 \left(-10\pm \sqrt{100-9 (4\beta-2)^2 }\right)}{9\pi (4\beta-2)} >0$. Thus, $f_2(r,\beta)=0$  has two positive roots.
	
	Therefore, when $\frac{1}{2}\leq \beta <1 $, $\frac{\partial^3 \mu(r, \phi,\varphi)}{\partial r^3}=0$ only has one positive root, denoted by $r_1$. When $0<r<r_1$, $\frac{\partial^3 \mu(r, \phi,\varphi)}{\partial r^3}<0$; when $r>r_1$, $\frac{\partial^3 \mu(r, \phi,\varphi)}{\partial r^3}>0$. Note that $\frac{\partial^2 \mu(r, \phi,\varphi)}{\partial r^2}|_{r \rightarrow 0} \rightarrow +\infty$ and $\frac{\partial^2 \mu(r, \phi,\varphi)}{\partial r^2}|_{r \rightarrow +\infty}= 0$. Thus, $\frac{\partial^2 \mu(r, \phi,\varphi)}{\partial r^2}= 0$ has only one positive root, denoted by $r_2$. When $0<r<r_2$, $\frac{\partial^2 \mu(r, \phi,\varphi)}{\partial r^2}>0$; when $r>r_2$, $\frac{\partial^2 \mu(r, \phi,\varphi)}{\partial r^2}<0$. Note that $\frac{\partial \mu(r, \phi,\varphi)}{\partial r}|_{r \rightarrow 0} \rightarrow 0$ and $\frac{\partial \mu(r, \phi,\varphi)}{\partial r}|_{r \rightarrow +\infty}= 0$. Thereby, $\frac{\partial \mu(r, \phi,\varphi)}{\partial r}$ is always bigger than zeros. Hence, $\mu(r,\phi,\varphi)$ always increases as the distance $r$ increases, and $\mu(r,\phi,\varphi)$ is always smaller than one.
	
	When $ 0< \beta <\frac{1}{2}$, $\frac{\partial^3 \mu(r, \phi,\varphi)}{\partial r^3}=0$ has two positive roots, denoted by $r_3$ and $r_4$. When $0<r<r_3$ or $r>r_4$ , $\frac{\partial^3 \mu(r, \phi,\varphi)}{\partial r^3}<0$; when $r_3 <r< r_4 $, $\frac{\partial^3 \mu(r, \phi,\varphi)}{\partial r^3}>0$. As such, $\frac{\partial^2 \mu(r, \phi,\varphi)}{\partial r^2}= 0$ has two positive roots, denoted by $r_5$ and $r_6$. When $0<r<r_5$ or $r>r_6$ , $\frac{\partial^2 \mu(r, \phi,\varphi)}{\partial r^2}>0$; when $r_5 <r< r_6 $, $\frac{\partial^2 \mu(r, \phi,\varphi)}{\partial r^2}<0$. Then, $\frac{\partial \mu(r, \phi,\varphi)}{\partial r}=0$ has one positive root, denoted by $r_7$. When $0<r<r_7$, $\frac{\partial \mu(r, \phi,\varphi)}{\partial r}>0$; when $r> r_7 $, $\frac{\partial \mu(r, \phi,\varphi)}{\partial r}<0$. Therefore, in this case, $\mu(r,\phi,\varphi)$ will first increase, and then decrease until it approaches one as the distance increases. $\mu(r,\phi,\varphi)$ reaches its peak at the distance $r_7$. According to $f_3(r,\beta)=0$, we have
	\begin{equation}
		\begin{aligned}
			r_8 = D \sqrt{\frac{ 10+ \sqrt{100-9 (4\beta-2)^2 }}{9\pi (2-4\beta)}}.
		\end{aligned}
	\end{equation}
	It is found that $r_8$ is close to the point where the second partial derivative of $\mu(r,\phi,\varphi)$ over $r$ is zero. Then, $r_7$ is slightly smaller than $r_8$.

	\bibliographystyle{IEEEtran}
	\bibliography{near_field}

	\end{document}